\newtheorem{definition}{Definition}
\newtheorem{theorem}{Theorem}
\newtheorem{lemma}{Lemma}
\DeclareMathOperator*{\argmax}{argmax}
\newcommand{\high}[1]{{\color{black}{#1}}}
\newcommand{\High}[1]{{\color{black}{#1}}}
\def\BibTeX{{\rm B\kern-.05em{\sc i\kern-.025em b}\kern-.08em
    T\kern-.1667em\lower.7ex\hbox{E}\kern-.125emX}}
\def\y{\mathbf{y}}
\def\x{\mathbf{x}}
\def\z{\mathbf{Z}}
\def\r{\mathbf{r}}
\def\q{\mathbf{q}}
\def\N{\mathcal{N}}
\def\E{\mathbb{E}}
\algrenewcommand\algorithmicrequire{\textbf{Input:}}
\algrenewcommand\algorithmicensure{\textbf{Output:}}
\begin{document}

\title{
\high{
Federated Learning with Fair Worker Selection: A
Multi-Round Submodular Maximization Approach} 
%
}
\author{
Fengjiao Li,
Jia Liu,
and Bo Ji 
\thanks{
Fengjiao Li (fengjiaoli@vt.edu) and Bo Ji (boji@vt.edu) are with the Dept. of Computer Science, Virginia Tech, Blacksburg, VA, USA. 
Jia Liu (liu@ece.osu.edu) is with the Dept. of Electrical and Computer Engineering,
The Ohio State University, Columbus, OH, USA.}
\thanks{{This work was supported in part by the NSF under Grants CNS-2112694, CNS-2110259, CCF-2110252, and a Google Faculty Research Award.}}
}


\maketitle

\begin{abstract}
In this paper, we study the problem of fair worker selection in Federated Learning systems, where fairness serves as an incentive mechanism that encourages more workers to participate in the federation. Considering the achieved training accuracy of the global model as the utility of the selected workers, which is typically a monotone submodular function, we formulate the worker selection problem
as a new multi-round monotone submodular maximization problem with cardinality and fairness constraints. The objective is to maximize the time-average utility over multiple rounds subject to an additional fairness requirement that each worker must be selected for a certain fraction of time. While the traditional submodular maximization with a cardinality constraint is already a well-known NP-Hard problem, the fairness constraint in the multi-round setting adds an extra layer of difficulty. To address this novel challenge, we propose three algorithms: Fair Continuous Greedy (FairCG1 and FairCG2) and Fair Discrete Greedy (FairDG), all of which satisfy the fairness requirement whenever feasible. Moreover, 
we prove nontrivial lower bounds on the achieved time-average utility under FairCG1 and FairCG2. In addition, by giving a higher priority to fairness, FairDG ensures a stronger short-term fairness guarantee, which holds in every round. Finally, we perform extensive simulations to verify the effectiveness of the proposed algorithms in terms of the time-average utility and fairness satisfaction.
\end{abstract}
\section{Introduction}

In the conventional machine learning paradigm, a large amount of data is often stored at a centralized server (e.g., a single machine or a datacenter) for training some learning model (e.g., a deep neural network)~\cite{googleAIBlog}. 
However, not only is this paradigm expensive in terms of data collection and storage, it also has a high risk in leaking users' data privacy~\cite{li2019survey}. For example, in 2019, hundreds of millions of Facebook user records were compromised on Amazon cloud servers~\cite{databreach}. To address such privacy concerns, Federated Learning (FL) has recently become a popular learning paradigm, where a myriad of 
worker devices collaboratively learn a global model, while all  training data are kept on the worker devices~\cite{googleAIBlog}.

In a FL system, worker devices often have heterogeneous computation or sensing capabilities. 
Due to limited resources (e.g., communication bandwidth)~\cite{lim2020federated,8851249}, it is not only desirable but also necessary for the FL server to pick a subset of workers to participate for a training task. 
Different worker selections yield different training accuracy levels that can be expressed as the utility of the selected workers. Note that the accuracy of a model has diminishing improvement as a function of sample size \cite{meek2002learning}.
%
\High{Considering the heterogeneity of the worker devices (e.g., different dataset sizes) and the redundancy among workers, the utility of the participating workers exhibits a diminishing returns property, which can be modeled as a submodular function \cite{balakrishnana2021fed}.}
In such scenarios, one needs to select a subset of workers for each training task to maximize the average utility over all tasks. On the other hand, ensuring fairness among the participating workers is one of the most important issues in such FL systems \cite{lim2020federated,li2019fair}. 
One common fairness requirement is that each worker must be selected at least for a certain fraction of time in the long run. %
Introducing such fairness requirements reduces the discrimination towards those workers with weaker sensing and computing capabilities,  thus providing an incentive that encourages more workers to participate in FL. 
Hence, there is a compelling need that motivates the worker selection problem in an FL system such that the total utility is maximized while some fairness guarantee among the workers is also ensured.

In this paper, we call the entire procedure of completing a training task from selecting participants to the end of training process a ``round". 
Noting that the training accuracy is typically a monotone submodular function of the selected workers,
we formulate the FL worker selection problem as a \underline{m}ulti-round \underline{m}onotone \underline{s}ubmodular \underline{m}aximization problem with \underline{c}ardinality and \underline{f}airness constraints, called MMSM-CF. Different from the traditional single-round version of the problem that is focused on one-shot optimization, the goal here is to maximize the time-average utility
subject to an additional fairness requirement that each worker must be selected for a minimum fraction of time. Our main contributions are summarized as follows:

\begin{list}{\labelitemi}{\leftmargin=1em \itemindent=-0.5em \itemsep=.2em}
	\item We study the problem of fair worker selection in FL systems and formulate it as a problem of  multi-round monotone submodular maximization with cardinality and fairness constraints. To the best of our knowledge, this is the first study that considers submodular maximization in the multi-round setting with fairness constraints. 
	It is well known that the single-round version of this problem even without fairness constraint is already NP-hard. Accounting for the fairness constraint in the multi-round setting adds an extra layer of difficulty as decisions could be coupled in a sophisticated fashion over multiple rounds. 
	
	\item To address this new challenge, we consider the multilinear extension of the submodular function and develop continuous greedy-based algorithms. At the first glance, it is unclear whether the original continuous greedy algorithm can achieve a provable performance guarantee in the setting with an additional 
	fairness requirement. In particular, the initial and intermediate points of the iterative continuous greedy algorithm could even be infeasible.
	Interestingly, we show that a straightforward variant of the original continuous greedy algorithm, called FairCG1, does achieve an approximation ratio of $(1-1/e)$ for the time-average utility while satisfying the fairness requirement whenever feasible. 
	Furthermore, we develop a new variant of the continuous greedy algorithm, called FairCG2, which explicitly accounts for the feasibility of the intermediate points during the updating process and achieves a fine-grained lower bound on the time-average utility. In addition, we propose a new discrete greedy algorithm, called FairDG, which gives a higher priority to fairness and thus ensures a stronger short-term fairness guarantee that holds in every round. FairDG also enjoys a much lower complexity compared to FairCG1 and FairCG2.
	
	\item Finally, we perform extensive simulations to verify the effectiveness of the proposed algorithms. The simulation results show that our proposed algorithms empirically achieve a near-optimal time-average utility (\High{within $1\%$} of the optimal). Interestingly, while FairCG1 guarantees an approximation ratio of $(1-1/e)$ uniformly for any fairness requirement, the obtained lower bound for FairCG2 is getting tighter as the fairness requirement becomes stronger.
\end{list}

The rest of the paper is organized as follows. We first discuss related work in Section~\ref{sec:related_work} and formulate the MMSM-CF problem in Section~\ref{sec:sys_prob}. To address this new problem with the long-term fairness constraint, we consider two fair continuous greedy algorithms (FairCG1 and FairCG2) and analyze their performance in Section~\ref{sec:faircg}. In Section~\ref{sec:fair_disgreedy}, we propose a fair discrete greedy algorithm (FairDG) and show that FairDG ensures a stronger short-term fairness guarantee.
Finally, we present simulation results in Section~\ref{sec:simulation} and make concluding remarks in Section~\ref{sec:conclusion}.

\section{Related Work}\label{sec:related_work}

\textbf{Worker Selection in FL:} Worker selection is an important factor of FL systems and has recently been studied in \cite{nishio2019client, 8851249}. The work of \cite{nishio2019client} proposed an FL protocol called FedCS, which selects the maximum possible number of workers using a greedy algorithm based on wireless states and devices' computing capabilities. However, fairness is not considered in this work.
In \cite{8851249}, an analytical model is developed to characterize the performance of FL with different worker scheduling policies in terms of the convergence rate performance. 
While proportional fairness is considered, scheduling decisions are made without accounting for the training accuracy. 
In \cite{li2019fair,mohri2019agnostic}, the authors considered fair resource allocation in FL systems by assigning a higher weight to workers with a higher loss. This helps reduce the bias and leads to a lower variance in the testing accuracy. 
However, the problem  formulated therein is not submodular-based and the fairness criterion they considered is very different from ours.

\smallskip
\textbf{Submodular Maximization:} Since the seminal work in \cite{nemhauser1978analysis}, the problem of monotone submodular maximization with a cardinality constraint has been extensively studied in the literature (see, e.g., \cite{krause2014submodular,buchbinder2017submodular}). It is well known that this problem is NP-hard \cite{feige1998threshold}. One important approximate solution is the classic (discrete) greedy algorithm, which iteratively selects an element with the largest marginal gain till the cardinality constraint is violated. It is shown that this greedy algorithm can achieve the best approximation ratio of $(1-1/e)$ \cite{nemhauser1978analysis}.
For submodular maximization with a general matroid constraint, it has been shown that a continuous greedy algorithm combined with the pipage rounding technique can achieve the same $(1-1/e)$-approximation \cite{vondrak2008optimal}. 
In \cite{badanidiyuru2014fast}, by integrating the idea of the classic discrete greedy, the authors developed a fast variant of the continuous greedy algorithm, which can significantly reduce the complexity while matching the same approximation ratio.
The work of \cite{chekuri2009randomized} studied a more general setting of maximizing a submodular function subject to a matroid and a set of linear packing constraints.

\smallskip
\textbf{Multi-round Submodular Maximization:} While most of the existing work has been focused on one-shot optimization, some recent studies investigate multi-round monotone submodular maximization with a cardinality constraint as we consider in this paper.  The most relevant work to ours is~\cite{sun2018multi}. However, there are several key differences: i) their work studied a specific multi-round influence maximization in social networks; ii) fairness requirements are not considered there; iii) a set would not be selected more than once because doing so does not generate any additional gain in their model. 
The work of \cite{lei2015online} 
also studied a similar problem of multi-round submodular maximization, with a focus on the online learning setting. Sequential submodular maximization has been considered in the active learning setting where a batch of data points are selected for labeling \cite{wei2015submodularity}. 
However, same data points would not be selected more than once either. Note that none of these studies addresses the same fairness concerns as ours.
Incorporating fairness constraints 
makes decisions coupled in a sophisticated fashion over multiple rounds. 
In addition, for FL systems, it is possible to select the same subset more than one round to maximize the average utility in our  problem.

\smallskip
\textbf{Fair Resource Allocation:} Algorithmic fairness has attracted tremendous interests in the machine learning community over the past few years in various contexts \cite{chouldechova2018frontiers,dwork2012fairness}. Various fairness criteria and learning paradigms have been discussed in the literature. 
Submodular maximization concerning privacy protections and fairness criteria in learning representation has been studied in~\cite{kazemi2018scalable}. However, they do not consider the multi-round setting and the fairness criterion is formulated as a robustness constraint, which is very different from ours.
The work of \cite{li2019combinatorial} studies fair resource allocation and learning in a multi-armed bandit setting. While the long-term fairness requirement we consider is the same, the objective function considered in \cite{li2019combinatorial} is linear rather than submodular.


\section{System Model and Problem Formulation} \label{sec:sys_prob}
In this section, we describe the system model and formulate the MMSM-CF problem for worker selection in FL systems. We begin with some basic notations: boldface symbols (e.g., $\x$) denote vectors; regular font symbols with subscript $u$ (e.g., $x_u$) denote the coordinate corresponding to element $u$; $\x^\mathsf{T}$ denotes the transpose of $\x$; $\x\vee\y$ denotes the coordinate-wise maximimum of $\x$ and $\y$;  $\mathbf{1}$ denotes the all-ones vector; $\mathbf{0}$ denotes the all-zeros vector; $\mathbf{1}_{u}$ denotes the standard basis vector whose coordinates are all zero, except the one corresponding to element $u$ being $1$; $\mathbb{R}$ (resp., $\mathbb{R}_+$) is the set of (resp., nonnegative) real numbers.

In the context of FL, we use $\N$ to denote the set of workers that are equipped with computing 
devices and are willing to participate in the training tasks. 
Let $n=|\N|$.  
For each training task, indexed by $t$, the FL protocol selects a subset of workers $S_t \subseteq \N$ whose size is at most $k \in \{1,2,\dots,n\}$. This cardinality constraint is used to model limited resources (e.g., communication bandwidth). After the entire training process of task $t$ is completed, a certain utility (which represents the training accuracy), denoted by $f_t(S_t)$, is achieved. Assume that these tasks are of the same type and have the same utility function, i.e., $f(\cdot)\triangleq f_t(\cdot), \forall t$, and that $f(\cdot)$ is a monotone submodular function\footnote{Consider a ground set $\N$. A set function $f: 2^{\N}\to \mathbb{R}_+$ is submodular if $f(A \cup \{u\}) - f(A) \geq f(B \cup \{u\}) - f(B)$ for every $A \subseteq B \subseteq \N$ and for every $u \in \N \setminus B$. Function $f$ is monotone if $f(A) \leq f(B)$ for every $A \subseteq B \subseteq \N$. We assume that $f(\cdot)$ is bounded and $f(\emptyset)=0$.}. Let $f(S)$ be the utility of the selected workers $S$ and $\N_k$ be the collection of all subsets of $\N$ of size at most $k$, i.e., $\N_k \triangleq \{S\subseteq \N:|S|\leq k\}$. We require $S_t\in \N_k$ for all $t$. Over a sequence of $T$ training tasks, we choose a sequence of worker sets $(S_1,S_2,\dots, S_T)$ to engage in the training tasks $\{1,2,\cdots, T\}$ and receive an average utility of $\frac{1}{T}\sum_{t=1}^T f(S_t)$. Throughout the rest of this paper, we simply use round $t$ to represent the entire training process of training task $t$ and call each worker an element of the ground set $\N$.   

Furthermore, we consider a fairness criterion of a minimum selection fraction for each individual worker and define a (long-term) fairness requirement in the following form:
\begin{equation}
	\liminf_{T\to \infty}\frac{1}{T} \sum_{t=1}^{T} \E\left[\mathbb{I}_{\{u \in S_t\}}\right]
	\geq r_u,  \, \, \forall u \in \N, \label{eq:fraction requirement}
\end{equation}
where $\mathbb{I}_{\{\cdot\}}$ is the indicator function and $r_u \in [0,1]$ is the minimum element $u$ has to be selected.
Vector $\mathbf{r}=[r_{u}]_{u\in \N}$ 
is said to be \emph{feasible} if there exists an algorithm that selects a sequence of sets $\mathcal{S} \triangleq (S_1,S_2,\dots)$ such that Eq.~\eqref{eq:fraction requirement} is satisfied. 
Given a feasible $\r$, our goal is to schedule a sequence of sets $\mathcal{S}$ that maximizes the average expected utility while satisfying the fairness requirement in Eq.~\eqref{eq:fraction requirement}. This leads to the following problem of \underline{m}ulti-round \underline{m}onotone \underline{s}ubmdoular \underline{m}aximization with \underline{c}ardinality and \underline{f}airness constraints (MMSM-CF):
\begin{maxi!}[2]
	{\substack{\mathcal{S}}} {\liminf_{T\to \infty}\frac{1}{T}\sum_{t=1}^T \E\left[f(S_t)\right]\label{eq:mmsm_cf_obj}}
	{\label{eq:MMSM-CF_problem}}{}
	\addConstraint{\eqref{eq:fraction requirement}~\text{and}~ S_t\in \N_k, \forall t \in \{1,2,\dots\}},  \label{eq:mmsm_cf_constraints}
\end{maxi!}
where the expectation is taken over all possible randomness of the considered algorithms. Assume a feasible fairness requirement $\r$. Let $U_{\mathrm{opt}}$ be the supremum value of the utility metric \eqref{eq:mmsm_cf_obj} over all feasible algorithms. Note that $U_{\mathrm{opt}}$ varies with different fairness requirements $\r$. When $\r=\mathbf{0}$, we have $U_{\mathrm{opt}}=\mathrm{OPT}$, where $\mathrm{OPT} \triangleq \max_{S\in \N_k} f(S)$ is \High{the highest utility associated with any feasible worker set, i.e., }
the optimal value for the single-round version of the problem. 

It is not difficult to see that the MMSM-CF problem is NP-hard. 
Consider the special case with fairness requirement $\r=\mathbf{0}$.
Then, the MMSM-CF problem degenerates to finding a subset of size at most $k$ that maximizes the submodular utility function in each round.
In this case, the problem 
is exactly the classic one-shot submodular maximization with a cardinality constraint, which is a well-known NP-hard problem \cite{feige1998threshold}. This simply implies that the MMSM-CF problem is also NP-hard.

Note that the fairness defined in Eq.~\eqref{eq:fraction requirement} represents a long-term requirement without any short-term guarantees. We can further strengthen the constraint to other forms of short-term fairness guarantees that hold uniformly over time. For example, consider the following short-term fairness requirement
\cite{patil2019stochastic}:
\begin{equation}
 \frac{1}{T}\sum_{t=1}^{T} \mathbb{I}_{\{u\in S_{t}\}} \geq r_u- \frac{1}{T^{\alpha}}, \forall u\in \N, \forall T \in \{1,2,\dots\}, \label{eq:alpha_fair_requirement}   
\end{equation}
where $\alpha>0$. We call an algorithm $\alpha$-fair if it satisfies Eq.~\eqref{eq:alpha_fair_requirement}. 

Clearly, by taking expectation of both sides of Eq.~\eqref{eq:alpha_fair_requirement} and letting $T$ go to infinity, the short-term fairness requirement with any given $\alpha>0$ implies the long-term fairness requirement in Eq.~\eqref{eq:fraction requirement}. Also, the larger the value of $\alpha$, the more stringent the short-term fairness requirement.  

\High{
We note that the MMSM-CF formulation is fairly general and finds applications not only in FL, but also in various networking and machine learning problems, including sensor scheduling in wireless sensor networks, task assignment in crowdsourcing, and data subset selection in machine learning. (See Appendix~A 
for more detailed discussions.)}
In the following, we will first focus on the design of approximation algorithms for the MMSM-CF problem with long-term fairness guarantees in Section~\ref{sec:faircg} and then design an $\alpha$-fair algorithm with short-term fairness guarantees in Section~\ref{sec:fair_disgreedy}.

\section{MMSM-CF with Long-term Fairness Guarantees}\label{sec:faircg}
In this section, we first reformulate the MMSM-CF problem with long-term fairness requirement as a Linear Programming (LP) by considering a class of stationary randomized policies. Then, we develop two continuous greedy algorithms (FairCG1 and FairCG2) and show that they both can approximately solve the MMSM-CF problem. Specifically, we show that the fairness requirement in Eq.~\eqref{eq:fraction requirement} is satisfied 
whenever feasible (Theorems~\ref{thm:fairness_guarantee} and \ref{thm:fairness_guarantee2}) and prove nontrivial lower bounds on the achieved time-average utility (Theorems~\ref{thm:approx_ratio_faircg1} and \ref{thm:approx_ratio_faircg2}).
\subsection{LP-based Reformulation}\label{sec:lp_reformulation}
Since the utility function does not change over rounds, the order of the selected worker sets $S_1,S_2,\dots$ does not impact the average utility. Hence, it suffices to find an optimal assignment of time fractions among all the sets in $\N_k$. Then, each set will be selected in the corresponding fraction of rounds.

We consider a class of \emph{stationary randomized policies} that randomly choose a set in each round. Consider such a stationary randomized policy $\pi$, which is characterized by a probability distribution $\mathbf{q}= [q_S]_{S\in \N_k}$, where $q_S$ is the probability of choosing set $S\in \N_k$. Then, we have $\sum_{S\in \N_k}q_S=1$. Under policy $\pi$, the following is satisfied for all $u \in \N$ and for all $t \in \{1,2,\dots\}$:
\begin{equation}
	\begin{aligned}
		\E\left[\mathbb{I}_{\{u\in S_t\}}\right]
		&= \E\left[ \sum_{S\in \N_k: u\in S} \mathbb{I}_{\{S_t=S\}}\right] \\
		&= \sum_{S\in \N_k: u\in S}  \E\left[\mathbb{I}_{\{S_t=S\}} \right]
		= \sum_{S\in\N_k:u\in S} q_S.\label{eq:randomized_fair}
	\end{aligned}
\end{equation}
Then, Eq.~\eqref{eq:fraction requirement} can be rewritten as $\sum_{S\in \N_k:u\in S} q_S \geq r_u$ for all $u \in \N$. 
The time-average utility in Eq.~\eqref{eq:mmsm_cf_obj} can also be rewritten as $\sum_{S\in \N_k} q_S f(S)$. Therefore, for this class of stationary randomized policies, the MMSM-CF problem can be reformulated as the following LP : 
\begin{maxi!}[2]
	{\substack{\mathbf{q}}} {\sum_{S\in \N_k} q_Sf(S) \label{eq:optimal_LP_obj}}
	{\label{eq:optimal_LP}}{}
	\addConstraint{\sum_{ S\in\N_k: u\in S } q_S \geq r_u, \quad \forall u\in \N}  \label{eq:optimal_LP_fair}
	\addConstraint{\sum_{ S\in \N_k} q_S= 1 \label{eq:optimal_LP_dist1}}
	\addConstraint{ q_S \in [0,1], \forall S\in\N_k. \label{eq:optimal_LP_dist2}}
\end{maxi!}

\begin{lemma}\label{lem:randomized_exist} Suppose that a fairness requirement $\r$ is feasible. Then, there is a stationary randomized policy that is optimal for the MMSM-CF problem in \eqref{eq:MMSM-CF_problem}. 
\end{lemma}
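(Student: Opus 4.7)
The plan is to establish Lemma~\ref{lem:randomized_exist} by showing two things at once: the LP in \eqref{eq:optimal_LP} attains its supremum, and this supremum equals $U_{\mathrm{opt}}$. Since a stationary randomized policy is exactly a feasible distribution $\mathbf{q}$ for that LP, any LP maximizer will then serve as the desired optimal policy. One direction, LP-value $\leq U_{\mathrm{opt}}$, is immediate from the computation in \eqref{eq:randomized_fair}, which certifies that every stationary randomized policy is itself a feasible MMSM-CF algorithm. The real work is the reverse inequality: any feasible (possibly history-dependent and non-stationary) algorithm can be matched in objective value by a stationary randomized one.

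First, I would fix an arbitrary feasible algorithm $\mathcal{A}$ producing a sequence $(S_1, S_2, \dots)$, and define the time-$T$ average marginals
\begin{equation*}
q_S^{(T)} \;:=\; \frac{1}{T}\sum_{t=1}^T \Pr[S_t = S], \qquad S \in \N_k .
\end{equation*}
Since $\N_k$ is finite, $\mathbf{q}^{(T)}$ lives in a compact probability simplex. Exchanging sums and expectations yields $\frac{1}{T}\sum_{t=1}^T \E[\mathbb{I}_{\{u\in S_t\}}] = \sum_{S\ni u} q_S^{(T)}$ and $\frac{1}{T}\sum_{t=1}^T \E[f(S_t)] = \sum_{S\in\N_k} q_S^{(T)} f(S)$, so all relevant averages under $\mathcal{A}$ are encoded by $\mathbf{q}^{(T)}$.

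Next, I would pick a subsequence $T_j \to \infty$ along which the utility average tends to its $\liminf$, equal to the value $U$ of $\mathcal{A}$, and then, by Bolzano--Weierstrass inside the simplex, extract a further subsequence on which $\mathbf{q}^{(T_j)} \to \mathbf{q}^*$. Boundedness of $f$ combined with the finiteness of $\N_k$ gives $\sum_S q_S^* f(S) = U$. For each $u \in \N$, the limit $\sum_{S\ni u} q_S^*$ is a subsequential limit of $\sum_{S\ni u} q_S^{(T)}$, and hence is at least $\liminf_T \sum_{S\ni u} q_S^{(T)} \geq r_u$ by \eqref{eq:fraction requirement}. Closedness of the simplex preserves $\sum_S q_S^* = 1$ and $q_S^* \geq 0$. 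Therefore $\mathbf{q}^*$ is LP-feasible with objective $U$, so LP-value $\geq U$; taking the supremum over all feasible $\mathcal{A}$ gives LP-value $\geq U_{\mathrm{opt}}$. Attainment of the LP supremum then follows from compactness of the feasible region and continuity of the linear objective in \eqref{eq:optimal_LP_obj}.

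The main obstacle, and the only step that is not a routine finite-dimensional computation, is this double subsequence extraction that simultaneously forces (i) the utility averages to converge to the correct $\liminf$, and (ii) the marginal vector $\mathbf{q}^{(T)}$ to converge in the simplex. The appearance of $\liminf$ (rather than $\lim$) in \eqref{eq:fraction requirement} is exactly what allows the fairness constraint to pass to the limit without loss, since every subsequential limit of a real sequence dominates its $\liminf$. Once this reduction is in place, the remainder of the argument is standard LP/compactness reasoning applied to the finite ground set $\N_k$.
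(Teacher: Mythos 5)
Your proof is correct. Note, however, that the paper does not actually supply a proof of Lemma~\ref{lem:randomized_exist}: it simply invokes \cite[Theorem~4.5]{neely2010stochastic}, mapping the fairness constraints in Eq.~\eqref{eq:fraction requirement} to mean-rate-stable virtual queues and the time-average utility to a time-average penalty, so that the optimality of stationary randomized policies follows from the general theory of stochastic network optimization. Your argument reaches the same conclusion by a direct, self-contained route: you encode any feasible (possibly non-stationary, history-dependent) algorithm by its time-averaged marginals $\mathbf{q}^{(T)}$ on the finite simplex over $\N_k$, extract a doubly-refined subsequence so that the utility average converges to its $\liminf$ while $\mathbf{q}^{(T)}$ converges, and observe that the limit point is LP-feasible (the fairness constraints survive because every subsequential limit dominates the $\liminf$) with the same objective value; compactness of the simplex then gives attainment. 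This is essentially the state-action-frequency argument that underlies Neely's theorem, so conceptually the two proofs rest on the same idea, but yours buys self-containment and makes explicit exactly which structural features are used (finiteness of $\N_k$, boundedness of $f$, and the $\liminf$ form of the constraints), whereas the citation buys brevity at the cost of requiring the reader to verify the correspondence between fairness constraints and queue stability. One small presentational point: you should state explicitly that the LP maximizer $\mathbf{q}^*$, interpreted as a stationary randomized policy, is itself a feasible algorithm for Problem~\eqref{eq:MMSM-CF_problem} achieving objective value equal to the LP optimum (this is your step via Eq.~\eqref{eq:randomized_fair}), since that is what closes the loop and shows $U_{\mathrm{opt}}$ is attained by a stationary randomized policy rather than merely bounded by the LP value.
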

We omit the proof of Lemma~\ref{lem:randomized_exist}, as it follows directly from \cite[Theorem~4.5]{neely2010stochastic}, where the objective is to minimize a time-average penalty (equivalent to maximizing the time-average utility in Eq.~\eqref{eq:mmsm_cf_obj}) while keeping all the queues mean rate stable (equivalent to the fairness requirement in Eq.~\eqref{eq:fraction requirement} being satisfied). 
Due to Lemma~\ref{lem:randomized_exist}, we can focus on \High{
finding an optimal stationary randomized policy for the MMSM-CF problem.}

While an LP can be solved in polynomial time with respect to the number of variables, in the above LP problem, the number of variables is exponential in the size of the input (i.e., $n$ and $k$). Solving the above LP requires $O(n^k)$ \emph{oracle queries}\footnote{We assume access to a \emph{value oracle} 
\cite{vondrak2008optimal}.
One oracle query means the value of $f(S)$ returned by the value oracle, provided an input $S\subseteq \N$.} to obtain the value of $f(S)$ for every set $S \in \N_k$. Since the MMSM-CF problem is NP-hard, we aim to develop efficient approximation algorithms.

In the following, we present two Fair Continuous Greedy algorithms, FairCG1 and FairCG2, by carefully integrating randomized dependent rounding with the (modified) continuous greedy algorithm based on multilinear extension \cite{vondrak2008optimal}.

\subsection{FairCG1} \label{sec:algorithm}
As discussed in Subsection~\ref{sec:lp_reformulation}, solving the MMSM-CF problem is equivalent to finding a distribution that leads to a fractional vector on $[0,1]^\N$.
This motivates us to explore continuous extensions of submodular functions and continuous methods that perform on $[0,1]^\N$. The multilinear extension is an important extension of submodular functions and has unique properties that are useful for (single-round) submodular maximization subject to a matroid\footnote{A \emph{matroid} is a pair $(\N, \mathcal{I})$ such that $\N$ is a finite set, and $\mathcal{I}$ is a non-empty collection of subsets of $\N$ satisfying the following properties: a) $A\subseteq B\subseteq \N$ and $B\in \mathcal{I}$ implies $A \in \mathcal{I}$; b) for any two sets $A, B\in\mathcal{I}$, with $|A|<|B|$, there exists an element $u\in B\backslash A$ such that $A\cup \{u\}\in\mathcal{I}$. The sets in $\mathcal{I}$ are called \textit{independent sets} \cite{krause2014submodular,calinescu2007maximizing}.} constraint. 
We restate the definition of multilinear extension \cite{vondrak2008optimal} as follows.
\begin{definition}
	For a set function $f: 2^{\N} \to \mathbb{R}$, its multilinear extension $F: [0,1]^{\N} \to \mathbb{R}$ is defined as
	\begin{equation}
		F(\y) \triangleq \sum_{S\subseteq \N} f(S)\prod_{u\in S } y_u\prod_{v\notin S}(1-y_v). \label{eq:multilinear_def}
	\end{equation}
\end{definition}
From the above definition, we can see that the multilinear extension $F(\y)$ is the expectation of $f(S)$ with $S$ determined by \High{selecting each element $u$ independently with probability $y_u$.} 
\High{Consider submodular maximization subject to a general matroid. One can obtain the relaxed maximization problem by replacing the submodular function with its multilinear extension and the original matroid with its corresponding matroid polytope\footnote{For a matriod $(\N, \mathcal{I})$, the matroid polytope is the convex hull of all the characteristic vectors of the independent sets in $\mathcal{I}$. Here, the characteristic vector of a set $S$ is the $n$-dimensional vector form of $S$, where the coordinate corresponding to every element $u \in S$ is equal to $1$ and the other coordinates are all equal to $0$.}.}
Then, one can attempt to approximate the original integer problem through the following two steps: i) finding an approximate fractional solution to the relaxed problem; ii) rounding the fractional solution to an integral solution without losing too much objective value~\cite{buchbinder2017submodular}. 

The continuous greedy algorithm is an efficient method (polynomial in $n$) to approximate the relaxed problem.
It maintains a vector $\y(\tau)$ that evolves during the time interval $\tau \in [0,1]$. Specifically, it starts with a zero-vector solution, i.e., $\y(0)=\mathbf{0}$, gradually updates the vector on the coordinates with maximal improvement, and finally generates a fractional vector $\y(1)$ in the polytope. 
In \cite{vondrak2008optimal}, it is shown that the output $\y(1)$ achieves an approximation ratio of $(1-1/e)$, i.e., $F(\y(1)) \geq (1-1/e) \cdot \mathrm{OPT}$ (with a small discretization error $o(1)$). 
Furthermore, thanks to the convexity of the multilinear extension in any direction $\mathbf{d} = \mathbf{1}_u-\mathbf{1}_v$ for any two different elements $u, v \in \N$, one can perform pipage rounding \cite{ageev2004pipage} on $\y(1)$ to obtain a subset $S\subseteq\N$ that satisfies $f(S) \geq F(\y(1)) \geq (1-1/e) \cdot \mathrm{OPT}$.

Next, we leverage the properties of 
the multilinear extension and extend the continuous greedy algorithm to address the MMSM-CF problem. 
\subsubsection{Algorithm Design}
Let $\y=[y_u]_{u\in \N}\in[0,1]^\N$ 
with $y_u$ being the probability 
of selecting element $u$. Obviously, under a randomized policy, $y_u$ is the expected time fraction of selecting $u$.
To comply to the constraints of Problem~\eqref{eq:optimal_LP}, $\y$ needs to satisfy $\r \leq \y \leq \mathbf{1}$ (fairness constraint) and $ \y^\mathsf{T} \mathbf{1} \leq k$ (cardinality constraint). Let $P_{f}$ be the feasible region for 
$\y$:
\begin{equation}
	P_f \triangleq \{\y \in [0,1]^\N: \r \leq \y \leq \mathbf{1} ~\text{and}~ \y^\mathsf{T}\mathbf{1}\leq k\}. \label{eq:polytope}  
\end{equation} 
Note that $P_f$ is a polytope.
Hence, one intuitive approach is to apply continuous greedy and pipage rounding technique presented in \cite{vondrak2008optimal}, where they study submodular maximization subject to a matroid contraint. In \cite{buchbinder2017submodular}, it is shown that the continuous greedy method can be applied to arbitrary convex body with zero-vector inside and achieves an approximation ratio of $(1-1/e)$
. 
In our problem, however, the feasible region $P_f$ does not contain the origin or any of the characteristic vectors of the sets of size $k$ when $\r>0$. This results in infeasible intermediate points (i.e., outside of the feasible region $P_f$) during the iterative process. Therefore, it is unclear whether this algorithm can achieve the same approximation ratio of ($1-1/e$) or not for the MMSM-CF problem. Specifically, the following three aspects are unclear:
 
\High{\emph{i) Since the initial vector and the intermediate vectors during the iterative process of continuous greedy may be infeasible, it is unclear whether the final solution $\y(1)$ is feasible or not. }}

\emph{ii) Both the objective function and the form of the optimal solution of the MMSM-CF problem are very different from the problems studied in the literature. It is unclear whether directly applying continuous greedy on $P_f$ can still achieve an approximation ratio of $(1-1/e)$ for the MMSM-CF problem.} 

\emph{iii) A deterministic pipage rounding does not work for the MMSM-CF problem with multi-round nature since it only provides a one-shot performance guarantee for the achieved utility without respecting the fairness requirement.}

To address the new MMSM-CF problem, we develop the first fair continuous greedy algorithm, called \emph{FairCG1}, by directly employing the continuous greedy algorithm on $P_f$ in the first step. After obtaining the fractional vector, we sequentially perform randomized dependent rounding (i.e., randomized pipage rounding) on the fractional output. Interestingly, we can show that not only can FairCG1 satisfy the fairness requirement in Eq.~\eqref{eq:fraction requirement}, but it also achieves $(1-1/e)$-approximation.
\begin{algorithm}[tb]	
	\caption{Fair Continuous Greedy with Randomized Dependent Rounding 1 (FairCG1)}
	\label{alg:faircg1}
	\begin{algorithmic}[1]
	\Require $\mathbf{r}$, $k$, and $\N$
	\Ensure $\mathcal{S}=(S_1,S_2,\dots)$
		\item[] $//$\emph{Run Fair Continuous Greedy in $P_f$ and get $\y(1)$}
		\State Set $\tau=0$ and $\mathbf{y}(0)=\mathbf{0}$ \label{alg_init}  
		\While{$\tau<1$} \label{alg_while_start}
		\For{$u\in \N$}
		\State Let $w_u(\tau)=F(\y(\tau)\vee \mathbf{1}_u)-F(\y(\tau))$   \label{alg_compute_w}
		\EndFor
		\State Let $\mathbf{w}(\tau) = [w_u(\tau)]_{u \in \N}$
		\State Find $\x(\tau) = \argmax_{\x \in P_f} \x^\mathsf{T}\mathbf{w}(\tau)$ 
		\State Increase $\y(\tau)$ at a rate of $\frac{d\y(\tau)}{d\tau} = \x(\tau)$, where $\tau$ is increased by $d\tau$ \label{alg_increase}
		\EndWhile  \label{alg_while_end}
		\item[] $//$\emph{Perform randomized dependent rounding on $\y(1)$}
		\For{$t=1,2, \cdots$} \label{alg_round_start}
		\State $S_t = $\textsc{DepRounding}($\y(1)$)  \label{alg_dr}
		\EndFor \label{alg_round_end}
		\Function{DepRounding}{$\y$} \label{alg_depRd_start}
				\While{$\exists u\in \N$ with $y_u \in (0,1)$} 
			\State find $u,v,~\text{and}~u\neq v$ such that $y_u, y_v \in (0,1)$
			\State $a=\min\{1-y_u, y_v\}$, $b=\min\{y_u,1-y_v\}$
			\State 	$(y_u, y_v) =
			\begin{cases}
			(y_u+a, y_v-a), ~\text{w.p.}~ \frac{b}{a+b} \\ \label{eq: played numbers updating}
			(y_u-b, y_v+b), ~\text{w.p.}~ \frac{a}{a+b}
			\end{cases}$
			\EndWhile
			\State \textbf{return} \ $S = \{u \in \N: y_u=1\}$
		\EndFunction \label{alg_depRd_end}
	\end{algorithmic}
\end{algorithm}

The detailed operations of FairCG1 are presented in Algorithm~\ref{alg:faircg1}.
It consists of two main steps. 

\emph{Step~1}: Run the original continuous greedy algorithm in the polytope $P_f$ defined in Eq.~\eqref{eq:polytope}. FairCG1 keeps a fractional vector $\y(\tau)$ starting from $\mathbf{0}$ and evolving during the time interval $[0,1]$. At each time point $\tau$, it finds a vector $\x(\tau)$ in $P_f$ that maximizes the dot product of $\x$ and $\mathbf{w}(\tau)$ and updates $\y(\tau)$ with a rate of $\x(\tau)$. By the end of this step
, we obtain a fractional vector $\y(1)$. It is not difficult to show $\y(1)^\mathsf{T} \mathbf{1}=k$.

\emph{Step~2}: With the fractional vector $\y(1)$ obtained from Step~1, we employ the randomized dependent rounding function, \textsc{DepRounding}, and obtain a random set $S_t$ in each round $t$. Given any input $\y$ satisfying $\y^\mathsf{T} \mathbf{1}=k$, function \textsc{DepRounding} selects $k$ elements from $\N$ with a probability specified by vector $\y$. Here, we only consider set $S_t$ of size $k$ since the utility function $f(\cdot)$ is monotone.
The detailed operations of \textsc{DepRounding} are presented in Lines~\ref{alg_depRd_start}-\ref{alg_depRd_end}.

\textbf{Remark:} 
In the first step of FairCG1, we implement the fair continuous greedy method by performing discretization in two aspects as in \cite{vondrak2008optimal}: i) we increase time $\tau$ by small steps of $d\tau = 1/n^2$; ii) each $F(\y)$ is evaluated with $n^5$ independent samples of $f(R(\y))$, where $R(\y)$ is a random set where each element $u$ appears independently with probability $y_u$. The value of $f(R(\y))$ for each sample is obtained via an oracle query. More details of the discretization process can be found in \cite{vondrak2008optimal}, where they showed that the above discretization error is $o(1)$.

\subsubsection{Performance of FairCG1}
We present the main results regarding the performance of FairCG1: i) whether it satisfies the fairness constraints or not and ii) the achieved time-average utility.
First, we show that FairCG1 guarantees the long-term fairness requirement in Eq.~\eqref{eq:fraction requirement} as long as the requirement vector $\mathbf{r}$ is feasible. We state this result in Theorem \ref{thm:fairness_guarantee}. 
\begin{theorem} \label{thm:fairness_guarantee}
	For any feasible requirement $\r$, FairCG1 satisfies the fairness requirement in Eq.~\eqref{eq:fraction requirement}. Furthermore, for any finite $T$,
	the selection fraction of each element $u$ deviates from its fairness requirement $r_u$ by any $\delta>0$ with probability at most $e^{-2T\delta^2}$. That is, we have
    \begin{equation}
		\mathbb{P}\left\{\frac{1}{T}\sum_{t=1}^{T} \mathbb{I}_{\{u\in S_t\}} \leq  r_u - \delta\right\} \leq e^{-2T\delta^2}. \label{eq:results_fairness}
	\end{equation}
\end{theorem}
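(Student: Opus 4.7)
My plan is to reduce both parts of the theorem to two structural properties of FairCG1: (a) the fractional vector produced by the continuous greedy phase satisfies $\y(1)\in P_f$, so in particular $y_u(1)\ge r_u$ for every $u\in\N$; and (b) the \textsc{DepRounding} subroutine preserves marginals, i.e., $\mathbb{P}(u\in S_t)=y_u(1)$ for every element $u$ and every round $t$. Granted (a) and (b), the expected selection frequency of $u$ in any single round is already $y_u(1)\ge r_u$, which immediately implies the long-term requirement \eqref{eq:fraction requirement}, and the concentration bound \eqref{eq:results_fairness} will follow from Hoeffding's inequality applied to the i.i.d.\ Bernoulli indicators $\{\mathbb{I}_{\{u\in S_t\}}\}_{t\ge 1}$.

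\textbf{Establishing feasibility of $\y(1)$.} In the continuous-time limit (and up to the $o(1)$ discretization error noted in the remark), FairCG1 produces $\y(1)=\int_0^{1}\x(\tau)\,d\tau$, where every $\x(\tau)$ belongs to $P_f$ by the $\argmax$ step. Since $P_f$ is a convex polytope, this integral---being a convex combination of points in $P_f$---also lies in $P_f$, and hence $y_u(1)\ge r_u$ coordinate-wise. The fact that the starting point $\y(0)=\mathbf{0}$ is outside $P_f$ when $\r>\mathbf{0}$ is immaterial: only the terminal vector needs to be feasible. The same convexity argument, together with the nonnegativity of the marginal gains $w_u(\tau)$ (which forces the $\argmax$ to saturate the cardinality constraint along the way), gives $\y(1)^\mathsf{T}\mathbf{1}=k$, which is what \textsc{DepRounding} requires to produce a set of size exactly $k$.

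\textbf{Marginals of \textsc{DepRounding} and concentration.} For (b), I would verify directly from the update rule in \textsc{DepRounding} that each elementary swap is a martingale in every coordinate: a one-line calculation on the randomized update of $(y_u,y_v)$ shows that the $+a$ outcome weighted by $b/(a+b)$ exactly cancels the $-b$ outcome weighted by $a/(a+b)$, so the expected post-swap value of $y_u$ equals its pre-swap value, and all other coordinates are untouched. Iterating until $\y\in\{0,1\}^\N$, the optional stopping theorem yields $\mathbb{P}(u\in S_t)=y_u(1)$. Because \textsc{DepRounding}$(\y(1))$ is invoked with independent randomness in each round on the same fixed input, the indicators $\{\mathbb{I}_{\{u\in S_t\}}\}_{t\ge 1}$ are i.i.d.\ Bernoulli with mean $y_u(1)$. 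Hoeffding's inequality then gives $\mathbb{P}\bigl\{\frac{1}{T}\sum_{t=1}^{T}\mathbb{I}_{\{u\in S_t\}}\le y_u(1)-\delta\bigr\}\le e^{-2T\delta^2}$, and substituting $y_u(1)\ge r_u$ yields \eqref{eq:results_fairness}; taking expectations in the same inequality and letting $T\to\infty$ gives \eqref{eq:fraction requirement}. The only step requiring any real thought is (a), and even there the main obstacle---that the trajectory begins outside the feasible set---dissolves into the single observation that $P_f$ is convex and $\y(1)$ is an integral of points inside it.
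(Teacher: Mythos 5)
Your proof is correct and follows essentially the same route as the paper's: feasibility of $\y(1)$ via convexity of $P_f$ (the integral $\int_0^1\x(\tau)\,d\tau$ of points in $P_f$), marginal preservation under \textsc{DepRounding}, and Hoeffding's inequality applied to the i.i.d.\ round indicators with mean $y_u(1)\ge r_u$. The only difference is that you verify the marginal-preservation property directly by a per-coordinate martingale argument on the rounding swaps, whereas the paper simply cites it as the \emph{marginal distribution} property of dependent rounding from the literature.
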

To prove Theorem~\ref{thm:fairness_guarantee}, we first show that the fractional vector $\y(1)$ obtained from Step~1 of FairCG1 satisfies $\y(1) \ge \mathbf{r}$. 
Then, in each round $t$, we select each element $u$ with probability $y_u(1)$ using \textsc{DepRounding}. Therefore, the fairness requirement of Eq.~\eqref{eq:fraction requirement} is satisfied. 
Furthermore, by applying the Hoeffding Bound \cite{hoeffding1994probability}, we can also show the probabilistic guarantee for fairness satisfaction in Eq.~\eqref{eq:results_fairness} for any finite $T$. 

Next, we prove a nontrivial lower bound on the time-average utility achieved by FairCG1. Recall that $U_{\mathrm{opt}}$ is the optimal value of Problem \eqref{eq:MMSM-CF_problem}. We state the lower bound in Theorem~\ref{thm:approx_ratio_faircg1}. 
\begin{theorem} \label{thm:approx_ratio_faircg1}
	The expected time-average utility under FairCG1
	has the following lower bound:
	\begin{equation}
	\liminf_{T\to \infty}	\frac{1}{T}\sum_{t=1}^{T}\E\left[f(S_t)\right] \geq (1-1/e) \cdot U_{\mathrm{opt}}. \label{eq:approx_ratio_faircg1}
	\end{equation}
\end{theorem}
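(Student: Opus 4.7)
The plan is to carry out the standard potential-function analysis of continuous greedy, but to benchmark $\y(\tau)$ against a carefully constructed fractional reference point derived from the LP optimum, rather than against an unspecified maximizer of $F$ on $P_f$. By Lemma~\ref{lem:randomized_exist}, there exists an optimal stationary randomized policy characterized by a distribution $\q^*=[q_S^*]_{S\in\N_k}$ attaining $U_{\mathrm{opt}}=\sum_{S\in\N_k}q_S^*f(S)$. Let $\chi_S\triangleq\sum_{u\in S}\mathbf{1}_u$ denote the characteristic vector of $S$, and define the marginal vector $\y^*\triangleq\sum_{S\in\N_k}q_S^*\chi_S$. The first step is to verify $\y^*\in P_f$: the fairness constraint \eqref{eq:optimal_LP_fair} gives $y_u^*\geq r_u$; the probability constraints \eqref{eq:optimal_LP_dist1}--\eqref{eq:optimal_LP_dist2} give $y_u^*\leq 1$; and $|S|\leq k$ for every $S\in\N_k$ yields $(\y^*)^\mathsf{T}\mathbf{1}=\sum_Sq_S^*|S|\leq k$.

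The step I expect to be the main obstacle is the key inequality
\begin{equation*}
\x(\tau)^\mathsf{T}\mathbf{w}(\tau) \;\geq\; U_{\mathrm{opt}}-F(\y(\tau)), \qquad \forall\tau\in[0,1].
\end{equation*}
Since $\x(\tau)$ maximizes $\x^\mathsf{T}\mathbf{w}(\tau)$ over $P_f$ and $\y^*\in P_f$, it suffices to show $(\y^*)^\mathsf{T}\mathbf{w}(\tau)\geq U_{\mathrm{opt}}-F(\y(\tau))$. Rewriting via $y_u^*=\sum_{S:u\in S}q_S^*$ yields $(\y^*)^\mathsf{T}\mathbf{w}(\tau)=\sum_{S\in\N_k}q_S^*\sum_{u\in S}w_u(\tau)$, and the heart of the argument is the pointwise bound $\sum_{u\in S}w_u(\tau)\geq f(S)-F(\y(\tau))$ for every $S\in\N_k$. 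I would obtain this by telescoping $F(\y(\tau)\vee\chi_S)-F(\y(\tau))$ element-by-element: submodularity of $F$ (inherited from $f$, since $\partial^2 F/\partial y_u\partial y_v\leq 0$) shows each telescoped marginal gain is no larger than the corresponding $w_u(\tau)$, while monotonicity delivers $F(\y(\tau)\vee\chi_S)\geq F(\chi_S)=f(S)$. Averaging against $\q^*$ and using $\sum_Sq_S^*=1$ then collapses $\sum_Sq_S^*f(S)$ to $U_{\mathrm{opt}}$.

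With the key inequality in place, the chain-rule identity $\frac{d}{d\tau}F(\y(\tau))=\x(\tau)^\mathsf{T}\nabla F(\y(\tau))$, combined with $w_u(\tau)=(1-y_u(\tau))\,\partial F/\partial y_u\leq\partial F/\partial y_u$, produces the differential inequality $\frac{d}{d\tau}F(\y(\tau))\geq U_{\mathrm{opt}}-F(\y(\tau))$. Integrating from $\tau=0$ with $F(\y(0))=F(\mathbf{0})=0$, and absorbing the $o(1)$ discretization error established in \cite{vondrak2008optimal}, yields $F(\y(1))\geq(1-1/e)U_{\mathrm{opt}}$. Finally, I would transfer this fractional guarantee to the per-round random sets $S_t=\textsc{DepRounding}(\y(1))$: since the dependent-rounding procedure preserves the marginals $\y(1)$ and induces negative correlation among coordinates, the standard pipage argument---namely, convexity of $F$ along any swap direction $\mathbf{1}_u-\mathbf{1}_v$ for submodular $f$---implies that each swap weakly increases the expected objective, so $\E[f(S_t)]\geq F(\y(1))$ in every round $t$. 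Because each round uses fresh randomness with the same input $\y(1)$, the time-average is lower-bounded by the same quantity, and taking $\liminf_{T\to\infty}$ yields~\eqref{eq:approx_ratio_faircg1}.
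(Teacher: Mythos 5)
Your proposal is correct and follows essentially the same route as the paper's proof: you construct the same reference point $\y^*=\sum_S q_S^*\mathbf{1}_S$ from the LP optimum, establish the same key inequality $\x(\tau)^\mathsf{T}\mathbf{w}(\tau)\geq U_{\mathrm{opt}}-F(\y(\tau))$ via submodularity and monotonicity (the paper's Lemma~\ref{lem:utility_inequality}), integrate the same differential inequality to get $F(\y(1))\geq(1-1/e)U_{\mathrm{opt}}$, and transfer to $\E[f(S_t)]\geq F(\y(1))$ by the same convexity-along-swap-directions argument for the dependent rounding. No substantive differences.
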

To prove Theorem~\ref{thm:approx_ratio_faircg1}, we first show that the fractional vector $\y(1)$ satisfies $F(\y(1)) \geq (1-1/e) U_{\mathrm{opt}} $ and then prove $\mathbb{E}[f(S_t)]\geq F(\y(1))$ for every round $t \in \{1,2,\dots\}$. This further implies Eq.~\eqref{eq:approx_ratio_faircg1} and completes the proof.
For the first part, we follow a similar line of analysis for the continuous greedy algorithm in \cite{buchbinder2017submodular}. The difference lies in the different forms of the optimal solution and the optimal value. The optimal solution we consider is a distribution $\mathbf{q}^*$ over feasible sets in $\N_k$, while in \cite{buchbinder2017submodular}, it is a set $S^* \in \argmax_{S\in \N_k} f(S)$. Besides, the optimal value of the MMSM-CF problem is $U_{\mathrm{opt}}= \sum_{S\in \N_k} q^*_Sf(S)$ compared to $f(S^*)$ in \cite{buchbinder2017submodular}. 
An interesting insight we obtain from the analysis is the following: Despite the above difference, the continuous greedy algorithm can still achieve an approximation ratio of $(1-1/e)$ even if it is applied to a convex region that does not contain the origin or the characteristic vectors of any sets with size $k$.

The above lower bound holds uniformly for any feasible fairness requirement $\mathbf{r}$.  If $\r=\mathbf{0}$, we have $U_{\mathrm{opt}} = \mathrm{OPT}$. 
Thus, we have $\E\left[f(S_t)\right] \geq  (1-1/e) \cdot \mathrm{OPT}$, which recovers the best possible approximation ratio for the classic one-shot monotone submodular maximization with a cardinality constraint.

\subsection{FairCG2}
In this subsection, we go one step further and develop a new variant of the continuous greedy algorithm, called \emph{FairCG2}, which starts with an initial point $\y(0) =\mathbf{r}$ such that the intermediate points $\y(\tau)$ are always kept in the feasible region $P_f$ during the updating process. 
We prove that FairCG2 achieves a fine-grained lower bound on the time-average utility, which can be characterized by the fairness requirement $\mathbf{r}$.

\subsubsection{Algorithm Design} In the first step, FairCG2 starts with $\r$ (instead of $\mathbf{0}$ as in FairCG1) and updates the vector with an adjusted rate.  
We present the details of FairCG2 in Algorithm~\ref{alg:faircg2}. Similar to FairCG1, FairCG2 consists of two main steps. The differences are as follows: i) the initial point is $\r$ rather than $\mathbf{0}$ (Line~\ref{alg_init}); ii) the updating rate of $\y(\tau)$ is $\frac{d\y(\tau)}{d\tau}=\x(\tau)-\r$ instead of $\x(\tau)$ (Line~\ref{alg_increase}), where $\x(\tau)\triangleq \argmax_{\x\in P_f}\x^{\mathsf{T}}\mathbf{w}(\tau)$. To better understand the difference between FairCG1 and FairCG2, we illustrate the updating processes of $\y(\tau)$ under these two algorithms in Fig.~\ref{fig:update_process}. By starting from $\mathbf{0}$, some points $\y(\tau)$ along the path taken by FairCG1 may not be feasible (i.e., outside of $P_f$) while every point $\y(\tau)$ under FairCG2 is feasible. However, the output $\y(1)$ is feasible under both FairCG1 and FairCG2. We also have $\y(1)^\mathsf{T} \mathbf{1}=k$ under FairCG2. Then, we round the fractional vector $\y(1)$ with randomized dependent rounding method in each round. 
In addition, the discretization 
in FairCG2 is performed in the same way as that in FairCG1.
\begin{figure}
    \centering
    \subfigure[FairCG1]{
    \label{fig:faircg1}
    \includegraphics[width =0.46\linewidth]{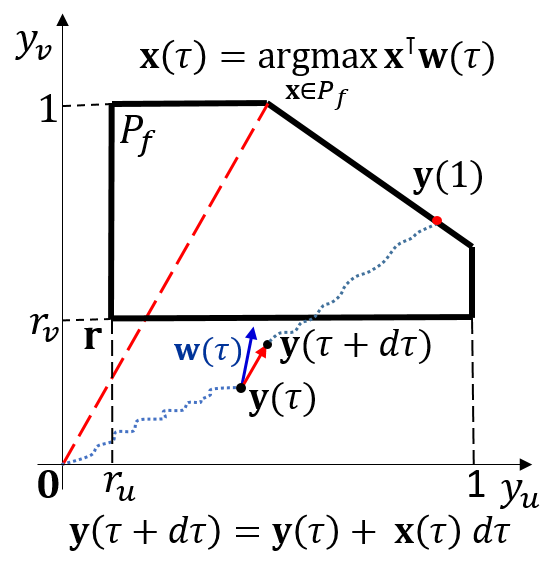}
    }
    \subfigure[FairCG2]{
    \label{fig:faircg2}
    \includegraphics[width=0.47\linewidth]{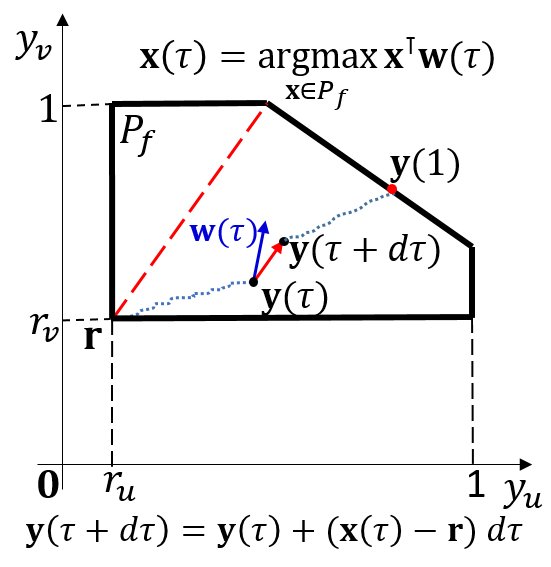}
    }
    \caption{Updating process of $\y(\tau)$ under FairCG1 and FairCG2.}
    \label{fig:update_process}
    \vspace{-1pt}
\end{figure}
\begin{algorithm}[tb]
	\caption{Fair Continuous Greedy with Randomized Dependent Rounding 2 (FairCG2)}
	\label{alg:faircg2}
	\begin{algorithmic}[1]
	\State Same as Algorithm~\ref{alg:faircg1} except for Lines~\ref{alg_init} and \ref{alg_increase}:
	\item[] Line~\ref{alg_init}: Set $\tau=0$ and $\y(0)=\r$
	 \item[]  Line~\ref{alg_increase}: Increase $\y(\tau)$ at a rate of $\frac{d\y(\tau)}{d\tau}=\x(\tau)-\r$, where $\tau$ is increased by $d\tau$
	\end{algorithmic}
\end{algorithm}

\subsubsection{Performance of FairCG2} \label{sec: performance_analysis}
We present the main results for FairCG2 in terms of fairness satisfaction and utility.

First, it is not difficult to show that FairCG2 guarantees the long-term fairness requirement in Eq.~\eqref{eq:fraction requirement} whenever the requirement $\mathbf{r}$ is feasible. We state this result in Theorem~\ref{thm:fairness_guarantee2} and omit the proof since it is almost the same as that of Theorem~\ref{thm:fairness_guarantee}. 

\begin{theorem} \label{thm:fairness_guarantee2}
	For any feasible requirement $\r$, FairCG2 satisfies the fairness requirement in Eq.~\eqref{eq:fraction requirement} \High{and also guarantees Eq.~\eqref{eq:results_fairness}.}
\end{theorem}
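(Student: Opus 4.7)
The plan is to mirror the proof of Theorem~\ref{thm:fairness_guarantee} almost verbatim, adapting only the parts that depend on the modified initialization and update rule of FairCG2. I would decompose the argument into three steps: (i) show that the fractional output $\y(1)$ of Step~1 satisfies $\y(1)\geq\r$; (ii) use the marginal-preservation of \textsc{DepRounding} to conclude \eqref{eq:fraction requirement}; (iii) apply Hoeffding's inequality to obtain \eqref{eq:results_fairness}.

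Step~(i) is where FairCG2 differs from FairCG1, so this is the only step requiring new work. The key observation is that the entire trajectory $\y(\tau)$ of FairCG2 remains inside the polytope $P_f$, not just the endpoint. Integrating $d\y/d\tau=\x(\tau)-\r$ from $\y(0)=\r$ yields
$$\y(\tau)=\r+\int_0^\tau(\x(s)-\r)\,ds=(1-\tau)\,\r+\tau\,\bar{\x}(\tau),$$
where $\bar{\x}(\tau)\triangleq\tfrac{1}{\tau}\int_0^\tau\x(s)\,ds$. Since each $\x(s)\in P_f$ by construction of Line~7 of Algorithm~\ref{alg:faircg1} and $P_f$ is convex, $\bar{\x}(\tau)\in P_f$. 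Feasibility of $\r$ guarantees $\r\in P_f$: summing \eqref{eq:optimal_LP_fair} over $u\in\N$ and exchanging the order of summation yields $\r^{\mathsf{T}}\mathbf{1}\leq\sum_{S\in\N_k}|S|\,q_S\leq k$, while $\r\leq\mathbf{1}$ holds by definition of the fairness vector. Hence $\y(\tau)$ is a convex combination of two points in $P_f$, so $\y(\tau)\in P_f$ for every $\tau\in[0,1]$, and in particular $\y(1)\geq\r$.

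For Step~(ii), note that \textsc{DepRounding} is unchanged in FairCG2; it is the standard randomized pipage rounding, which preserves marginal probabilities. Therefore $\mathbb{P}\{u\in S_t\}=y_u(1)\geq r_u$ for every element $u$ and every round $t$, so $\E[\mathbb{I}_{\{u\in S_t\}}]\geq r_u$. Averaging over $t$ and letting $T\to\infty$ immediately yields \eqref{eq:fraction requirement}. For Step~(iii), since the same $\y(1)$ is rounded independently in each round, the indicators $\{\mathbb{I}_{\{u\in S_t\}}\}_{t=1}^{T}$ are i.i.d.\ Bernoulli variables bounded in $[0,1]$ with mean at least $r_u$; a direct application of Hoeffding's inequality gives \eqref{eq:results_fairness}. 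The main obstacle is really just verifying the trajectory-containment in Step~(i), since everything downstream is identical to the FairCG1 argument, consistent with the authors' remark that the proof is ``almost the same'' as that of Theorem~\ref{thm:fairness_guarantee}.
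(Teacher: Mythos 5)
Your proposal is correct and follows essentially the same route as the paper: integrate the update rule to express $\y(1)$ as a convex combination of points in $P_f$ (the paper computes $\y(1)=\int_0^1(\x(\tau)-\r)\,d\tau+\r=\int_0^1\x(\tau)\,d\tau$ directly, whereas you additionally verify the whole trajectory stays in $P_f$, which is a slightly stronger but unneeded fact), then invoke the marginal-preservation property of \textsc{DepRounding} to get $\mathbb{P}\{u\in S_t\}=y_u(1)\geq r_u$, and finish with Hoeffding's inequality applied to the i.i.d.\ indicators. The only substantive check beyond Theorem~\ref{thm:fairness_guarantee}'s argument is that $\r\in P_f$ for feasible $\r$, which you correctly justify via $\r^{\mathsf{T}}\mathbf{1}\leq k$ (Lemma~\ref{lem:feasible_r} in the paper).
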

Next, we show that by starting from $\r$ and updating the process with rate $\x(\tau)-\r$, FairCG2 offers a fine-grained lower bound on the achieved time-average utility, which can be characterized by the fairness requirement $\r$.
We present this result in Theorem~\ref{thm:approx_ratio_faircg2}. Recall that $U_{\mathrm{opt}}$ is the optimal value of Problem~\eqref{eq:MMSM-CF_problem} and that $F(\cdot)$ is the multilinear extension of $f(\cdot)$.
\begin{theorem} \label{thm:approx_ratio_faircg2}
	The time-average expected utility under FairCG2 has the following lower bound:
	\begin{equation}
		\liminf_{T\to \infty}\frac{1}{T}\sum_{t=1}^{T}\E\left[f(S_t)\right] \geq (1-1/e^{c_{\mathbf{r}}}) \cdot U_{\mathrm{opt}} +F(\mathbf{r})/e^{c_{\mathbf{r}}}, \label{eq:approx_ratio_faircg2}
	\end{equation}
	where $c_{\mathbf{r}} \triangleq 1-\max\{\max_u r_u, \r^\mathsf{T}\mathbf{1}/k\}$.
\end{theorem}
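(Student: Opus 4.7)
The plan is to mirror the two-step template used for Theorem~\ref{thm:approx_ratio_faircg1} (FairCG1), adjusting for the new starting point $\y(0)=\r$ and update rule $\dot{\y}(\tau)=\x(\tau)-\r$. First I would lower-bound the fractional output $F(\y(1))$ produced at the end of Step~1 of Algorithm~\ref{alg:faircg2}, and then lift the bound to $\E[f(S_t)]\geq F(\y(1))$ for every round $t$ by invoking the rounding half of the FairCG1 proof verbatim: Algorithm~\ref{alg:faircg2} calls the same \textsc{DepRounding} routine on a fractional vector with $\y(1)^\mathsf{T}\mathbf{1}=k$, so the concavity of $F$ along exchange directions $\mathbf{1}_u-\mathbf{1}_v$ (the engine behind randomized pipage rounding) yields $\E[f(S_t)]\geq F(\y(1))$ exactly as before. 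Averaging over $t$ and taking $\liminf_{T\to\infty}$ converts the fractional bound into the claimed time-average inequality.

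For the fractional bound, the target is the differential inequality
\[
\frac{dF(\y(\tau))}{d\tau}\;\geq\;c_{\r}\bigl(U_{\mathrm{opt}}-F(\y(\tau))\bigr),
\]
since integrating it over $[0,1]$ with initial condition $F(\y(0))=F(\r)$ gives $F(\y(1))\geq(1-e^{-c_{\r}})U_{\mathrm{opt}}+F(\r)e^{-c_{\r}}$ via the standard exponential-growth argument on the gap $U_{\mathrm{opt}}-F(\y(\tau))$. Writing $\dot{F}(\y)=\sum_u(x_u-r_u)\,\partial_u F(\y)$, monotonicity of $f$ together with $y_u\in[0,1]$ gives $\partial_u F(\y)\geq(1-y_u)\partial_u F(\y)=w_u(\y)$, and $\x(\tau)\geq\r$ (because $\x(\tau)\in P_f$) then gives $\dot{F}(\y)\geq\sum_u(x_u-r_u)\,w_u(\y)$. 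So it suffices to establish $\sum_u(x_u-r_u)\,w_u(\y)\geq c_{\r}\sum_u x_u(\tau)\,w_u(\y)$ and to combine it with the standard continuous-greedy inequality $\sum_u x_u(\tau)\,w_u(\y)\geq U_{\mathrm{opt}}-F(\y)$. The latter follows by picking the optimal stationary-randomized distribution $\q^*$ from Lemma~\ref{lem:randomized_exist}, setting $\x^*=\sum_S q^*_S\,\mathbf{1}_S\in P_f$ (with $\mathbf{1}_S$ the characteristic vector of $S$), and averaging the submodular inequality $\sum_{u\in S}w_u(\y)\geq F(\y\vee\mathbf{1}_S)-F(\y)\geq f(S)-F(\y)$ over $\q^*$, together with the fact that $\x(\tau)$ maximizes $\x'^{\mathsf{T}}\mathbf{w}(\y)$ over $P_f$.

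The crux, and what I expect to be the main obstacle, is a rescaling trick that pins down the constant $c_{\r}$. Set $m:=1-c_{\r}=\max\{\max_u r_u,\,\r^{\mathsf{T}}\mathbf{1}/k\}$ and consider the rescaled vector $\mathbf{v}:=\r/m$. Then $\mathbf{v}\geq\r$ since $m\leq 1$, $\mathbf{v}\leq\mathbf{1}$ since $m\geq\max_u r_u$, and $\mathbf{v}^{\mathsf{T}}\mathbf{1}=\r^{\mathsf{T}}\mathbf{1}/m\leq k$ since $m\geq\r^{\mathsf{T}}\mathbf{1}/k$; the two components of the max are exactly what keeps $\mathbf{v}$ inside $P_f$, and no smaller value of $m$ would work. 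Because $\x(\tau)$ is the maximizer over $P_f$ and $\mathbf{w}(\y)\geq 0$, we get $\sum_u x_u(\tau)\,w_u(\y)\geq\sum_u v_u\,w_u(\y)=(1/m)\sum_u r_u\,w_u(\y)$, i.e., $\sum_u r_u\,w_u(\y)\leq(1-c_{\r})\sum_u x_u(\tau)\,w_u(\y)$. Subtracting this from $\sum_u x_u(\tau)\,w_u(\y)$ yields $\sum_u(x_u-r_u)\,w_u(\y)\geq c_{\r}\sum_u x_u(\tau)\,w_u(\y)$, and chaining with the continuous-greedy bound closes the differential inequality.

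The discretization and sampling errors of the continuous-greedy routine ($d\tau=1/n^2$ and $n^5$ samples per evaluation of $F$) contribute an $o(1)$ additive loss exactly as in FairCG1, so they do not affect the asymptotic bound. As a sanity check, when $\r=\mathbf{0}$ we have $c_{\r}=1$ and $F(\r)=0$, so the bound degenerates to the $(1-1/e)$-approximation of Theorem~\ref{thm:approx_ratio_faircg1}; and when $\r$ saturates either constraint ($\max_u r_u\to 1$ or $\r^{\mathsf{T}}\mathbf{1}/k\to 1$), we have $c_{\r}\to 0$ and the bound collapses to $F(\r)$, which is the best achievable once $P_f$ degenerates to essentially a single point.
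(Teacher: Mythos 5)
Your proposal is correct and follows the same two-step architecture as the paper's proof: first establish the differential inequality $\frac{dF(\y(\tau))}{d\tau}\geq c_{\r}\left(U_{\mathrm{opt}}-F(\y(\tau))\right)$ and integrate it from the initial condition $F(\y(0))=F(\r)$, then reuse the FairCG1 rounding argument to get $\E[f(S_t)]\geq F(\y(1))$; your derivation of $\x(\tau)^{\mathsf{T}}\mathbf{w}(\tau)\geq U_{\mathrm{opt}}-F(\y(\tau))$ via the optimal distribution $\q^*$ is exactly the paper's Lemma~\ref{lem:utility_inequality}. The one place you diverge is the crux step $(\x(\tau)-\r)^{\mathsf{T}}\mathbf{w}(\tau)\geq c_{\r}\,\x(\tau)^{\mathsf{T}}\mathbf{w}(\tau)$: the paper certifies it by exhibiting the feasible point $\y'(\tau)=\r+c_{\tau}\x(\tau)\in P_f$ with a $\tau$-dependent constant $c_{\tau}\geq c_{\r}$, whereas you exhibit the single feasible point $\r/m$ with $m=1-c_{\r}$ and compare $\x(\tau)$ against it. Both constructions exploit the same two facts (optimality of $\x(\tau)$ over $P_f$ and the two constraints defining $m$), and they produce the identical constant; yours is marginally cleaner in that it removes the dependence on $\x(\tau)$ and the set $D_\tau$ from the auxiliary point, at the cost of a trivial caveat you should state (when $\r=\mathbf{0}$, $m=0$ and the rescaling is undefined, but the inequality holds trivially with $c_{\r}=1$).
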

Similar to the analysis of FairCG1, we prove Theorem \ref{thm:approx_ratio_faircg2} as follows: we first show that the fractional vector $\y(1)$ satisfies $F(\y(1)) \geq (1-1/e^{c_{\mathbf{r}}}) \cdot U_{\mathrm{opt}} +F(\mathbf{r})/e^{c_{\mathbf{r}}}$ and then prove $\mathbb{E}[f(S_t)]\geq F(\y(1))$ for each $t$. 
The analysis is slightly different since both the starting point and the updating rate depend on the fairness requirement $\r$. 

\textbf{Remark:}
The lower bound in Theorem~\ref{thm:approx_ratio_faircg2} appears to be getting tighter as the fairness requirement becomes more stringent. This can be observed in the simulation results (see Fig.~\ref{fig:impact_of_r}). Furthermore, we can show that the lower bound is at least $(1-1/e) \cdot U_{\mathrm{opt}}$ in two extreme cases: i) $\mathbf{r}=\mathbf{0}$; and ii) $\r=\r^{\prime}$ with ${\r^{\prime}}^\mathsf{T}  \mathbf{1} = k$. 
When $\r=\mathbf{0}$, we have $c_{\r} = 1$, $F(\r)=0$, and $U_{\mathrm{opt}} = \mathrm{OPT}$. 
This implies $\E\left[ f(S_t)\right] \geq  (1-1/e) \cdot \mathrm{OPT}$, which recovers the best possible approximation ratio. In another extreme case of $\r=\r^{\prime}$
, we have $c_{\r^{\prime}} = 0$, $P_f=\{\r^{\prime}\}$, and thus $\y(1)=\r^{\prime}$. According to \cite[Theorem~II.1]{chekuri2010dependent}, we obtain
$\E\left[f(S_t)\right] \geq F(\r^{\prime}) $.
Let $U_{\mathrm{opt}}{(\r^{\prime})}$ denote the optimal time-average utility with fairness requirement $\r^{\prime}$.
Combining the result in \cite{calinescu2007maximizing}, it is not difficult to derive $F(\r^{\prime})\geq (1-1/e) \cdot U_{\mathrm{opt}}(\r^{\prime})$, and then, we have $\E\left[ f(S_t)\right] \geq  (1-1/e) \cdot U_{\mathrm{opt}}$ as well for $\r=\r^{\prime}$.

\subsection{Complexity of FairCG1 and FairCG2}
To study the complexity of an algorithm for submodular  optimization, 
it is common to analyze the number of oracle queries required by the algorithm \cite{buchbinder2017submodular}. Here, we consider not only the number of oracle queries but also the running time of FairCG1 and FairCG2, since the problem itself has a multi-round nature. 
Specifically, the first step of FairCG1 and FairCG2 is a variant of the continuous greedy algorithm, implemented with discretization
, which results in $2n^8$ oracle queries in the first step.
Consider $T$ rounds. 
During the second step, performing \textsc{DepRounding$(\y(1))$} 
takes at most $n$ iterations in each round, which results in $O(nT)$ running time with no oracle queries. Therefore, the complexity of FairCG1 and FairCG2 is\footnote{We use $O^\dagger(\cdot)$ to denote the number of oracle queries.} $(O^{\dagger}(n^8)+ O(nT))$, which could be quite high as $n$ gets large. This motivates us to further develop low-complexity approximation algorithms to solve MMSM-CF.

The work of \cite{badanidiyuru2014fast} developed a fast variant of the continuous greedy algorithm that integrates the discrete greedy algorithm and has a significantly lower complexity, while matching the best known approximation ratio of $(1-1/e)$.
The basic idea is the following. In each while loop iteration (i.e., Lines~\ref{alg_while_start}-\ref{alg_while_end}) of Algorithm~\ref{alg:faircg1}, continuous greedy aims to find some $\x(\tau)$ in the polytope, which is the characteristic vector of some size-$k$ subset when the constraint is a matroid. Such a subset can be approximately found using the idea of discrete greedy with a much lower complexity. However, the fairness constraint we consider renders this fast continuous greedy algorithm inapplicable. 
This is because $\x(\tau) \in P_f$ is not necessarily a characteristic vector of some size-$k$ subset in our case. Therefore, discrete greedy cannot be applied here. It also remains unclear how one can adapt such a fast continuous greedy algorithm to address the MMSM-CF problem.

\section{MMSM-CF with Short-Term Fairness Guarantees}
\label{sec:fair_disgreedy}
In this section, by taking into account the fairness requirement, we develop a new variant of the discrete greedy algorithm, which has a much lower complexity than FairCG1 and FairCG2, especially when the size of the ground set (i.e., $n$) is large. Moreover, this new algorithm ensures the stronger short-term fairness requirement in Eq.~\eqref{eq:alpha_fair_requirement}.
\begin{algorithm}[t]
	\caption{Fair Discrete Greedy (FairDG)}
	\label{alg:discrete_alg}
	\begin{algorithmic}[1]
	\Require $\mathbf{r}$, $k$, and $\N$ 
	\Ensure $\mathcal{S}=(S_1,S_2,\dots)$
		\State Initialize $N_{u,0}=0 $ for all $u \in \N$
		\For{$t=1,2, \cdots$}
		\State $A_t = \{u\in \N: r_u t-N_{u,t-1}\geq 0\}$
		\State $l = |A_t|$ 
		\If{$l< k$}
		\State $B_0 = A_t$ 
		\For{$j = 1, \cdots, k-l$}
		\State $ v \in \argmax_{u\in \N\backslash B_{j-1}} \Delta(u|B_{j-1})$
		\State $B_{j}= B_{j-1}\cup \{v\}$
		\EndFor
		\State $S_t=B_{k-l}$
		\Else
		\State $B_0 = \emptyset$
		\For{$j = 1, \cdots, k$}
		\State $ v \in \argmax_{u\in \N\backslash B_{j-1}} (r_ut-N_{u,t-1})$
		\State $B_{j}= B_{j-1}\cup \{v\}$
		\EndFor
		\State $S_t=B_{k}$
		\EndIf
		\EndFor
	\end{algorithmic}
\end{algorithm}
Recall that the classic discrete greedy algorithm for single-round monotone submodular maximization with a cardinality constraint starts with an empty set and iteratively adds an element with the largest marginal gain until the cardinality constraint is violated. This simple and efficient greedy algorithm achieves the best possible approximation ratio of $(1-1/e)$ \cite{nemhauser1978analysis}. 
Hence, a natural idea is to give a higher priority to the fairness requirement while being greedy.
Specifically, in each round, we first check the violation of the fairness requirement for each element and then make decisions in a greedy manner by giving the unsatisfied element a higher priority. We call this new algorithm Fair Discrete Greedy (\emph{FairDG}). The detailed operations of FairDG are presented in Algorithm~\ref{alg:discrete_alg}. 
\begin{figure*}[!t]
		\centering
		\subfigure[Utility over rounds]{
			\label{fig:utility_evolve}
			\includegraphics[width=0.32\textwidth]{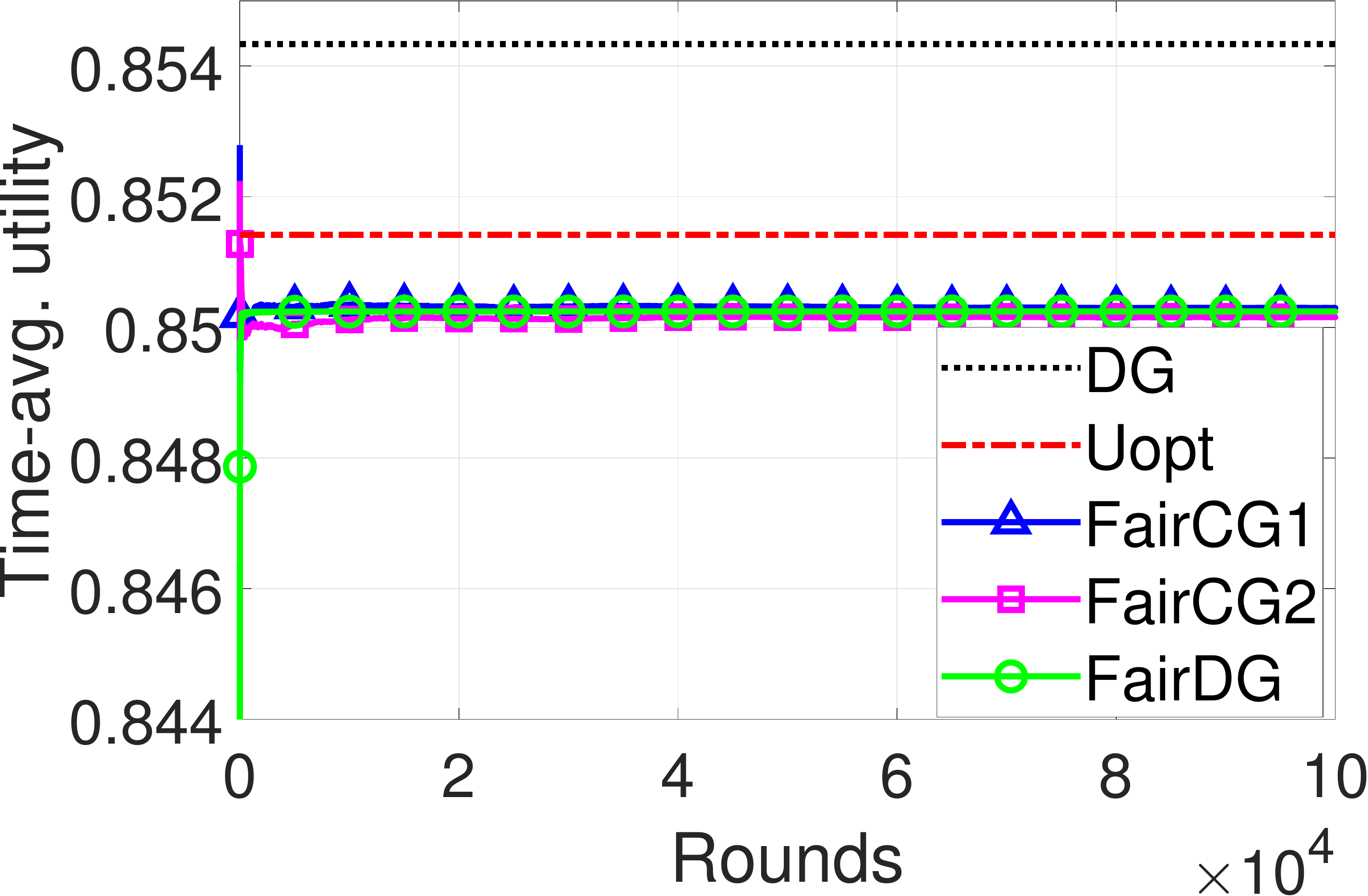}}
		\quad
		\subfigure[Selection fractions in $T$ rounds]{
			\label{fig:fraction} 
			\includegraphics[width=0.3\textwidth]{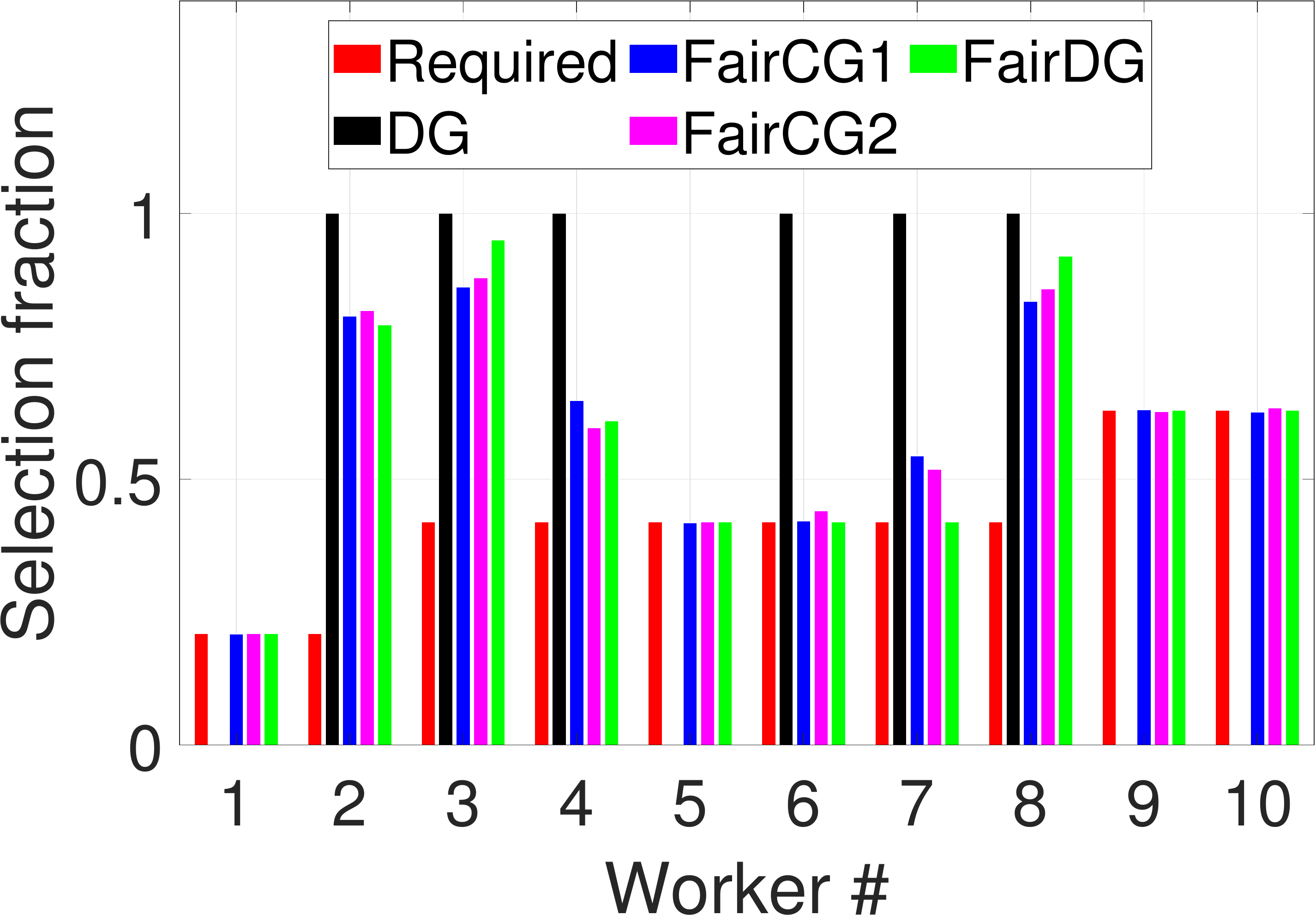}}
		\quad
		\subfigure[Selection fraction of worker $u_1$]{
			\label{fig:fraction_over_time} 
			\includegraphics[width=0.3\textwidth]{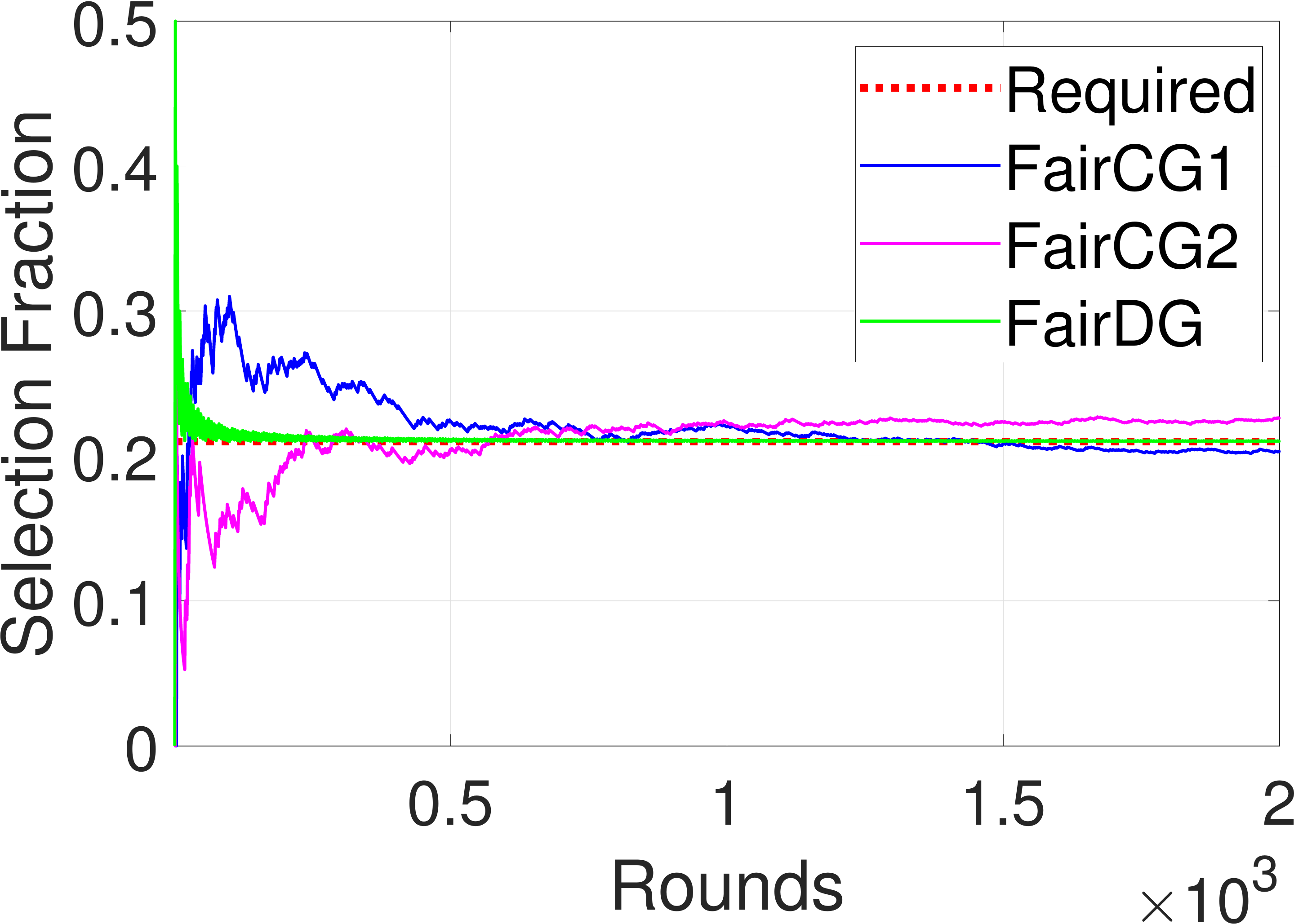}}
		\caption{Performance comparisons of different algorithms.}
		\label{fig:performance_utility_fair} 
\end{figure*}
\quad
\begin{figure}[ht]
	\begin{center}
		\centering \includegraphics[width=0.7\linewidth]{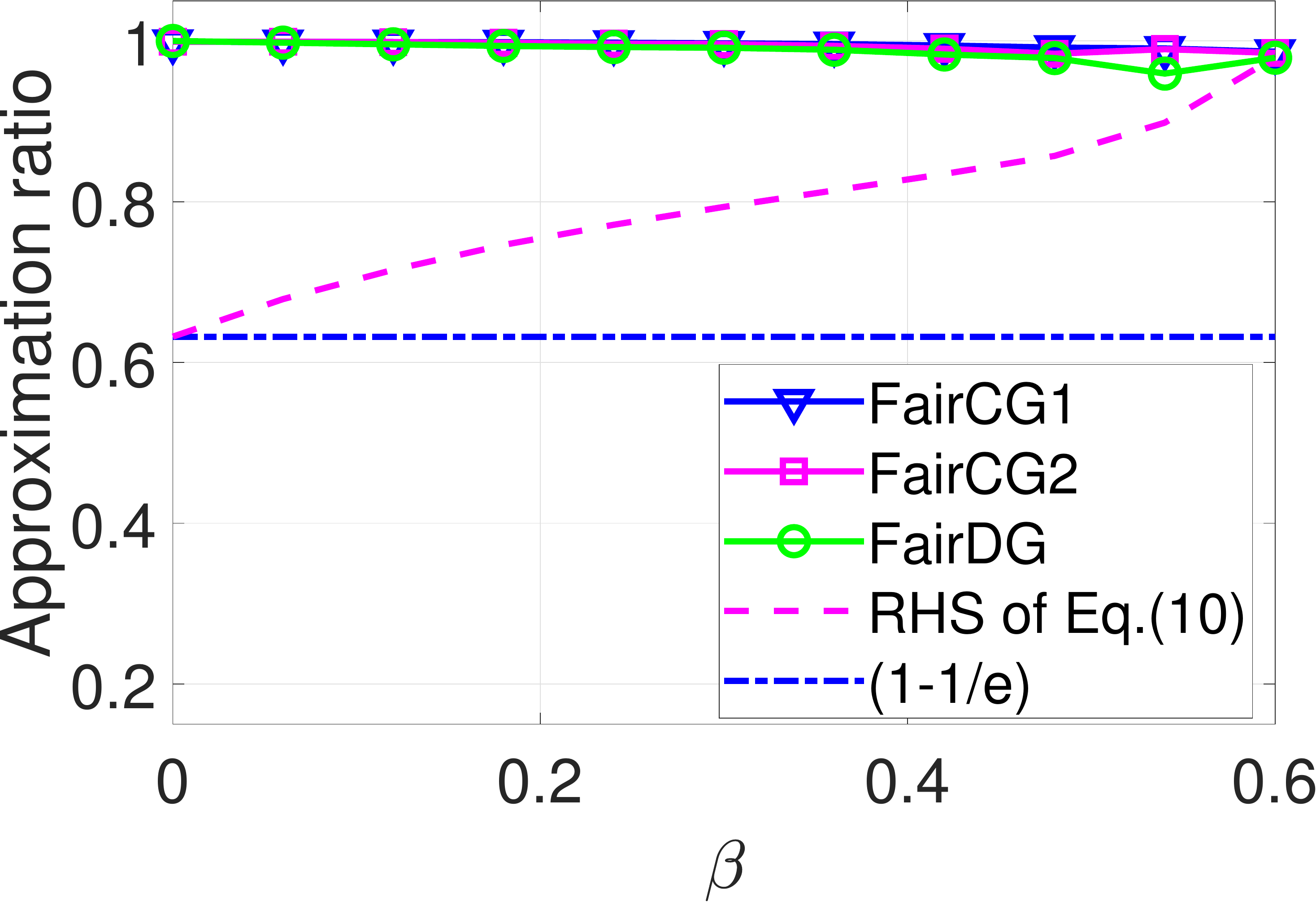}
	\end{center}
	\caption{Impact of fairness requirement on time-average utility.}\label{fig:impact_of_r}
\end{figure}

Let $N_{u,t}\triangleq\sum_{t^{\prime}=1}^t\mathbb{I}_{u\in S_{t^{\prime}}}$ denote the number of times element $u$ has been selected by the end of round $t$, 
and let $\Delta(u|S)\triangleq f(S\cup\{u\})-f(S)$ be the marginal gain of adding element $u$ to current set $S$. At the beginning of each round $t$, 
we find a set $A_t$ consisting of elements $u$ with a nonnegative ``debt'' (i.e., violation of fairness requirement), i.e., $ r_ut-N_{u,t-1}\geq 0$. Then, depending on the size of $A_t$, FairDG performs in two different ways. Let $|A_t|=l$. If there are less than $k$ violated elements, i.e., $l< k$, then we pick all of these violated elements and select $k-l$ elements from the remaining (satisfied) elements according to their marginal gain 
$\Delta(u|S)$ as in the classic discrete greedy method; if there are at least $k$ violated elements, i.e., $l\geq k$, then we select $k$ of them according to their ``debts'' in a greedy manner. 
By aggressively selecting the unsatisfied elements in each round, 
FairCG2 guarantees the short-term fairness in  Eq.~\eqref{eq:alpha_fair_requirement} 
with a homogeneous fairness requirement:
\begin{theorem} \label{thm:dg_fair}
	Assume $r_u=r$ for every element $u \in \N$, where $r\geq 0$ and $nr\leq k$. 
	The FairDG algorithm is $1$-fair.
\end{theorem}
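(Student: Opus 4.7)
The plan is to prove, by strong induction on $T$, the equivalent inequality $D_{u,T} := rT - N_{u,T} \le 1$ for every $u \in \mathcal{N}$ and every $T \ge 0$; this is exactly the $1$-fair condition after rearranging. The base case $D_{u,0} = 0$ is immediate. In the inductive step, I would fix any $u$ and use the one-step update $D_{u,T} = D_{u,T-1} + r - \mathbb{I}_{\{u \in S_T\}}$: if $u \in S_T$, the inductive hypothesis yields $D_{u,T} \le 1 + r - 1 = r \le 1$; if $u \notin S_T$ and additionally $u \notin A_T$, then $d_{u,T} := rT - N_{u,T-1} < 0$ and again $D_{u,T} < 0 \le 1$.

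The nontrivial case is $u \in A_T$ with $u \notin S_T$. Inspecting Lines~5--17 of Algorithm~\ref{alg:discrete_alg}, this forces $|A_T| \ge k$ (otherwise $u \in B_0 = A_T \subseteq S_T$), and $u$ is not among the top $k$ elements by debt, so a set $V$ of $k$ other elements with $d_{v,T} \ge d_{u,T}$ is selected in $u$'s place. I would argue $d_{u,T} \le 1$ by contradiction: assuming $d_{u,T} > 1$ produces $k+1$ distinct elements (namely $V \cup \{u\}$) with debt strictly greater than $1$ at round $T$. Combining the selection budget $\sum_{v} N_{v,T-1} = k(T-1)$ (valid because $n \ge k$ and FairDG always selects exactly $k$ elements) with the feasibility assumption $nr \le k$ gives
\[
\sum_{v \in \mathcal{N}} d_{v,T} = nrT - k(T-1) = (nr - k)T + k \le k,
\]
so these $k+1$ elements alone account for more than $k+1$, forcing the complement to have aggregate debt strictly less than $-1$.

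The main obstacle is converting this accounting tension into a rigorous contradiction, because the plain inductive hypothesis controls debts only from above, not from below. My plan is to strengthen the induction to a joint invariant: at every round $s$, in addition to $D_{u,s} \le 1$ for all $u$, one also has $|\{v : d_{v,s} \ge 1\}| \le k$. The second clause is the key new ingredient, and it propagates forward under FairDG's largest-debt-first rule (Lines~13--16): any element with $d \ge 1$ strictly outranks every element with $d < 1$, so all such elements necessarily lie in the top $k$ and are selected, after which their updated debt becomes $d - 1 + r \le r$, safely below the threshold. The only way a new element can enter $\{d \ge 1\}$ at the next round is via the accumulation $D_{v,s} + r \ge 1$ with $v \notin S_s$, and the count of such candidates is controlled by a telescoping layer of thresholds $|\{d \ge 1 - jr\}| \le (j+1)k$, which ultimately saturates once $1 - jr$ drops below the trivial bound. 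With the strengthened clause in hand, the hypothesis of $k+1$ elements with $d > 1$ at round $T$ is directly contradicted, closing the induction. If the telescoping runs into technical issues at boundary regimes of $r$, a fallback is a Lyapunov-style drift argument on $\Phi_t := \sum_u (D_{u,t})^+$, exploiting the known stability of largest-debt-first scheduling under the load condition $nr \le k$.
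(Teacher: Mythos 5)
Your reduction of the theorem to the single invariant $\vert\{v : d_{v,t} \ge 1\}\vert \le k$ (with $d_{v,t}=rt-N_{v,t-1}$) is sound: together with the three easy cases it would close the induction, and counting high-debt elements is the right kind of quantity to track. The genuine gap is in the \emph{propagation} of that invariant, and it is not merely a boundary technicality. Your argument that a selected element leaves the high-debt set rests on the claim that its updated debt is $d-1+r\le r$, but the plain inductive hypothesis only gives $d_{v,t}\le 1+r$, so the updated debt can be as large as $2r$, which is $\ge 1$ whenever $r\ge 1/2$. Concretely, take $n=3$, $k=2$, $r=2/3$: element $u_3$ enters round $2$ with debt $4/3$, is selected, and nevertheless re-enters $\{d\ge 1\}$ at round $3$ with debt exactly $1$, alongside a previously unselected element. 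The invariant itself survives there ($\vert\{d_{\cdot,3}\ge1\}\vert=2=k$), but your recursion $C_0(t+1)\le C_1(t)-k$ with $C_j(t)\triangleq\vert\{v: d_{v,t}\ge 1-jr\}\vert$ fails ($C_1(2)=3$, so $C_1(2)-k=1<2$), precisely because selected elements contribute to $C_0(t+1)$. The same defect recurs at every telescoping level $j$ with $(j+2)r\ge 1$, and the trivial bound $n\le k/r$ does not take over early enough to absorb it. The fallback you mention does not rescue the argument either: a drift bound on $\sum_u (D_{u,t})^+$ yields boundedness of aggregate debt, whereas $1$-fairness requires the sharp per-element constant $1$.

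For comparison, the paper's proof (via Lemma~\ref{lem:inductive_rslt}) implements the same counting philosophy but organizes it so that re-entry never arises: instead of tracking debt thresholds at every round, it tracks selection-count levels $J_{T,m}=\{u: N_{u,T}=m\}$ only at checkpoint rounds $T=\lceil m/r\rceil$, proving by induction that (i) every element reaches count $m$ by round $\lceil m/r\rceil$ and (ii) $\vert J_{\lceil m/r\rceil,m}\vert \le \left(\frac{m+1}{r}-\lceil\frac{m}{r}\rceil\right)k$, so the stragglers at level $m$ can be drained at rate $k$ per round before the next checkpoint; an element that advances past level $m$ is never again relevant to that level. The theorem then follows because $N_{u,T}\ge m$ for all $T\ge\lceil m/r\rceil$ forces $rT-N_{u,T}<1$ on the whole period. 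To salvage your route you would need to replace the fixed threshold $1$ by a round-dependent family of thresholds tied to integer selection counts --- which is essentially the paper's checkpoint construction.
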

The proof is inspired by \cite{patil2019stochastic}. We show that the fairness ``debt'' for each element $u$ by the end of round $t$ is less than one, i.e., $rt-N_{u,t}<1$ for all $u$ in $\N$ and each $t$. However, our proof is more involved because of the combinatorial nature in each round. 
Details can be found in \High{Appendix~\ref{app:short_fair_proof}.}

\textbf{Remark:} Theorem~\ref{thm:dg_fair} implies that ``debt'' of each element will go to zero as $T$ goes to infinity. This further implies that the long-term fairness requirement in Eq.~\eqref{eq:fraction requirement} can be satisfied.
FairDG satisfies a short-term fairness requirement by giving a higher priority to fairness in each round. However, this priority in fairness makes it quite challenging to analyze the time-average utility since FairDG has to select part of a set without accounting for the utility. We leave the utility analysis under FairDG for future work. Note that the optimal time-average utility $U_{\mathrm{opt}}$ for MMSM-CF with the long-term fairness requirement offers a natural upper bound on the optimal time-average utility with a short-term fairness constraint. 

FairDG has a complexity of $O^{\dagger}(knT)$, as there are $T$ rounds, and in each round, it performs at most $O^{\dagger}(kn)$ oracle queries.

\section{Numerical Results} \label{sec:simulation}
In this section, we conduct simulations to evaluate the performance of our proposed algorithms (FairCG1, FairCG2, and FairDG). Specifically, our simulations are designed to answer the following questions: i) Whether the proposed algorithms satisfy the fairness requirements?
ii) How well do the proposed algorithms perform in term of the time-average utility?
iii) How does the fairness requirement impact the achieved utility 
and the derived lower bounds of our proposed algorithms? 
iv) How tight are the theoretical lower bounds
?

\High{In simulations, we assume \emph{i.i.d.} datasets and the same computing capabilities across workers for simplicity and  use the accuracy function in \cite[Eq.~(1)]{figueroa2012predicting} by setting minimum achievable error $a=0.05$, learning rate $b=0.5$, and decay rate $c=-0.2$. Let $L_u$ be the number of samples involved in worker $u$. Then, the expected utility of selecting set $S$ is
\begin{equation}
	f(S) = (1-a) - b*\left(\sum_{u \in S} L_u\right)^c. \label{eq:dixit_func}
\end{equation}}%
Throughout the simulations, we set $n=10, k=6$, and $T=10^5$. Other parameters are presented in Table~\ref{tab:sim_setting}. The optimal average utility $U_{\mathrm{opt}}$ is obtained by running an LP solver in Matlab (\texttt{linprog}) to solve Problem~\eqref{eq:optimal_LP}.
\begin{table}[t]
	\caption{Parameters settings}
	\begin{center}
		\scalebox{0.87}{
		\begin{tabular}{c|c|c|c|c|c|c|c|c|c|c}
\hline
Worker index&$u_1$ & $u_2$ & $u_3$ & $u_4$ & $u_5$ & $u_6$ & $u_7$ & $u_8$ & $u_9$ & $u_{10}$  \\
\hline
$L_u/1000$ & $0.2$ & $ 0.8$ & $ 1$ & $ 0.5$ & $ 0.1$ & $ 0.3$ & $ 0.4$ & $ 0.9$ & $ 0.1$ & $ 0.2$  \\
\hline
$\mathbf{r}_{\mathrm{base}}$ & $0.5$ & $ 0.5$ & $ 1$ & $ 1$ & $ 1$ & $ 1$ & $ 1$ & $ 1$ & $ 1.5$ & $ 1.5$  \\
\hline
$\r$ & \multicolumn{10}{c}{$\r = \beta \r_{\mathrm{base}}$} \\
\hline
\end{tabular}
		}
	\vspace{-15pt}
	\label{tab:sim_setting}
	\end{center}
\end{table}

First, we evaluate the performance of the three proposed algorithms (FairCG1, FairCG2, and FairDG) in terms of the time-average utility and the fairness requirement satisfaction.
Let $\beta = 0.42$. Then, $\r = \beta \r_{\mathrm{base}}=[.21, .21, .42, .42, .42, .42, .42, .42, .63, .63]$, which is feasible since $\r^\mathsf{T} \mathbf{1} = 4.2 < k = 6$. For comparisons, we also consider the discrete greedy algorithm (denoted by DG). 
The simulation results are presented in Fig.~\ref{fig:performance_utility_fair}. Specifically, in Fig.~\ref{fig:utility_evolve}, we display the time-average utility over $T$ rounds for the considered algorithms, including the optimal value $U_{\mathrm{opt}}$. 
Fig.~\ref{fig:fraction} shows the selection fraction of each element at the end of $T$ rounds. 
Based on the simulation results, we make the following observations:
i) From Fig.~\ref{fig:utility_evolve}, we observe that the achieved utility of our proposed algorithms is very close to the optimal value $U_{\mathrm{opt}}$ (within 1\%). Also, Fig.~\ref{fig:fraction} indicates that the selection fraction of each worker under each of our proposed algorithms satisfies the required selection fraction. 
ii) From Fig.~\ref{fig:utility_evolve}, DG appears to achieve the largest time-average utility, which is even higher than the optimal value $U_{\mathrm{opt}}$. However, as shown in Fig.~\ref{fig:fraction}, only $k$ workers \High{$\{u_2, u_3, u_4, u_6, u_7, u_8\}$} are repeatedly selected in every round under DG, which implies that the fairness requirement is not satisfied for the other workers.

In addition, we consider the selection fraction of each worker over rounds and only present the results for worker $u_1$ over the first $2,000$ rounds in Fig.~\ref{fig:fraction_over_time} as a representative example.  
While FairCG1 and FairCG2 perform slighly better than FairDG in terms of the time-average utility as shown in Fig.~\ref{fig:utility_evolve}, we observe from Fig.~\ref{fig:fraction_over_time} that FairDG converges to a point satisfying the fairness requirement (i.e., $0.21$) much faster than FairCG1 and FairCG2 do.
This is not surprising because FairDG gives a higher priority to satisfying the fairness requirement.

Finally, we investigate the impact of the fairness requirement $\r$ on the time-average utility and the tightness of the theoretical lower bounds derived in Theorems~\ref{thm:approx_ratio_faircg1} and \ref{thm:approx_ratio_faircg2}. We set different values of the fairness requirement $\r = \beta \r_{\mathrm{base}}$ by scaling the value of $\beta$. A larger value of $\beta$ means a stronger fairness requirement. Consider $\beta \in \{0, .06, .12, .18, .24, .30, .36, .42, .48, .54, .60 \}$, i.e., we have $11$ distinct fairness requirement vectors, all of which are feasible. 
We run the proposed algorithms for each of them and plot the corresponding approximation ratio (the time-average utility over the optimal value $U_{\mathrm{opt}}$) in Fig.~\ref{fig:impact_of_r}. The ratios of the lower bounds in Theorems~\ref{thm:approx_ratio_faircg1} and \ref{thm:approx_ratio_faircg2} over $U_{\mathrm{opt}}$ are presented as well.
From Fig.~\ref{fig:impact_of_r}, we can observe that the approximation ratios under FairCG1, FairCG2, and FairDG are close to one for each fairness requirement. 
%
Interestingly, while FairCG1 is guaranteed to achieve an approximation ratio of $(1-1/e)$ uniformly for different fairness requirements (Theorem~\ref{thm:approx_ratio_faircg1}), the ratio for the theoretical lower bound of FairCG2 increases with the fairness requirement and approaches one. 
Hence, we have a tighter bound for FairCG2 as the fairness requirement becomes stronger. 
By having a lower bound with respect to fairness requirement $\r$ in Theorem~\ref{thm:approx_ratio_faircg2}, we obtain a tighter characterization for FairCG2 compared to FairCG1. 

\section{Conclusion} \label{sec:conclusion}
In this paper, we formulated the fair worker selection in FL systems as a novel problem of multi-round monotone submodular maximization with cardinality and fairness constraints. To address this new problem, we proposed three carefully designed algorithms (i.e., FairCG1, FairCG2, and FairDG)
. We presented both theoretical and simulation results to demonstrate the effectiveness of our proposed algorithms. 
Our study in this paper raises several interesting questions that are worth investigating as future work. For example, 
can we establish approximation guarantees for FairDG that satisfies the short-term fairness criterion? 
While we assume the same utility function over rounds in our model, it would be interesting to consider the setting with round-dependent (i.e., task-dependent) utility functions, which better suits certain applications. 
In this setting, we may still adopt FairCG1 and FairCG2 by applying them to each round with a different utility function, but that would incur a much higher complexity. In contrast, FairDG can be directly applied to this setting with the same complexity and the short-term fairness guarantee. However, it remains open to analyze the achieved utility under FairDG.
Finally, if the submodular function under consideration is unknown in advance, it is highly interesting to investigate the joint learning and selection problem of multi-around submodular optimization.



\bibliographystyle{IEEEtran}

\bibliography{reference}

\clearpage

\appendix
\subsection{Applications}\label{sec:app}

In this section, we discuss three additional examples of real-world applications in detail to better motivate the proposed MMSM-CF problem. These examples include sensor scheduling in wireless sensor networks \cite{gupta2006stochastic,shamaiah2010greedy,jawaid2015submodularity,tzoumas2018resilient}, task assignment in crowdsourcing platforms \cite{to2015server,yang2017identifying,gao2015providing}, and data subset selection in machine learning \cite{wei2015submodularity,kirchhoff2014submodularity,badanidiyuru2014streaming}.

\emph{Sensor Scheduling in Wireless Sensor Networks:}
We consider sensor scheduling in wireless sensor networks, where a set of inexpensive sensors are deployed to sense the environment state process. After the measurements from the sensors are transmitted to a sink node, they will be fused to estimate the environment state process. Obviously, more measurements (collected from distinct sensors) will result in a more accurate estimation of the environment state. However, only a subset of sensors could transmit their measurements simultaneously (e.g., due to wireless interference) \cite{gupta2006stochastic,shamaiah2010greedy}. Let $f(S)$ be the aggregate sensing quality of the measurements collected from the scheduled sensors $S$. 
Due to the spatial correlation among the sensor measurements, the aggregate sensing quality $f(S)$ often exhibits a diminishing returns property \cite{gupta2006stochastic,shamaiah2010greedy}. 
Moreover, one usually needs to repeatedly collect the measurements (i.e., in multiple rounds) to continuously monitor the environment and aims to maximize the overall sensing quality over time. 
In addition, to obtain a holistic view of the environment that is being monitored, one often does not want to miss out on too much data from any single sensor. This imposes a minimum delivery ratio requirement of each sensor, which can be modeled as the fairness requirement in our model. Hence, the goal is to schedule a sequence of subsets of sensors so as to maximize the overall sensing quality over time while guaranteeing a minimum delivery ratio of each sensor.

\emph{Task Assignment in Crowdsourcing Platforms:}
Crowdsourcing offers an efficient method for task distribution and completion. Consider a spatial crowdsourcing application (e.g., traffic speed estimation) \cite{to2015server, yang2017identifying}. 
When a spatial task arrives at the crowdsourcing platform, it will be assigned to a group of workers on the platform (e.g., no more than $k$ workers due to the budget limit). A certain amount of utility is generated after this particular task is completed. The utility could represent the informative data gathered for the crowdsourced sensing task. 
Due to the similarity of responses from different workers~\cite{to2015server,gao2015providing,yang2017identifying}, such as the possible overlapping sensing data from different workers~\cite{zhang2014groping}, the utility could be described as a submodular function with regard to the set of assigned workers~\cite{yang2017identifying}. The participation of more workers usually leads to more informative data and thus a larger utility (i.e., monotone). 
As in \cite{to2015server}, we assume that the sequential tasks are of the same type and that all the workers are qualified to perform the tasks. This can be captured by the multi-round nature of our model.
The goal here is to maximize the time-average utility by determining an optimal worker assignment for multiple tasks. In addition, the platform has to take fairness towards workers into account through a minimum assignment ratio guarantee for each worker. This helps maintain a healthy and sustained platform with improved satisfaction among the workers and thus encourage more participation.

\emph{Data Subset Selection in Machine Learning:}
The data subset selection problem in machine learning has been extensively studied in the literature \cite{wei2015submodularity,kirchhoff2014submodularity,badanidiyuru2014streaming}. This is motivated by both limited computational resources and redundant information in a massive amount of data. For training, one prefers to select a subset of data sources that is informative or representative of the entire dataset as modeled by the corresponding objective function.
It has been shown that some highly relevant objective functions (used to measure the informativeness or the representativeness, or the combination of the two) are submodular with regard to the selected data sources because of a diminishing returns property it exhibits.  
Consider a multi-round training process. Let the total utility be a simple additive sum of these objective values corresponding to the sequentially selected data subsets. The goal here is to maximize the total utility. 
Moreover, in order to ensure enough data for post-training data analysis, a minimum selection fraction requirement for each data source must be taken into consideration.
We should select not only the most informative or representative data sources but also those less informative or representative ones for a certain amount of times. 
This can be naturally modeled as a fairness requirement using the MMSM-CF framework. In this case, our goal is to sequentially select a subset of data sources that maximize the total utility while guaranteeing a minimum selection fraction for each data source. 

\subsection{Proof of Theorem \ref{thm:fairness_guarantee} and Theorem~\ref{thm:fairness_guarantee2}}\label{app:proof_fairness}
\begin{proof}
	First, for any feasible $\mathbf{r}$, $P_f$ is non-empty. By the end of Step 1 (i.e., $\tau=1$) of the fair continuous greedy algorithms, we have 
	
	i) FairCG1
	\begin{equation}
	  \begin{aligned}
			\y(1) &= \int_{0}^1 \frac{d\y(\tau)}{d\tau} d\tau + \y(0) \\
			&= \int_{0}^1 \x(\tau) d\tau + \mathbf{0} = \int_{0}^1 \x(\tau)d\tau, \label{eq:y_combination_x_1}
		\end{aligned}  
	\end{equation}
	
	ii) FairCG2
	\begin{equation}
		\begin{aligned}
			\y(1) &= \int_{0}^1 \frac{d\y(\tau)}{d\tau} d\tau + \y(0) \\
			&= \int_{0}^1 (\x(\tau)-\r) d\tau + \r = \int_{0}^1 \x(\tau)d\tau, \label{eq:y_combination_x_2}
		\end{aligned}
	\end{equation}
	i.e., $\y(1)$ is a convex linear combination of $\x(\tau)$'s, and thus, $\y(1)$ must be in $P_f$ since $\x(\tau) \in P_f$ and $P_f$ is a convex set. 
	Besides, it is not difficult to show $\y(1)^\mathsf{T}  \mathbf{1} = k$ since every $\x(\tau)$ is a vertex of the convex body $P_f$ satisfying  $\x^\mathsf{T}  \mathbf{1} = k$.
	Therefore, by the end of this step, we derive a fractional solution $\y(1)$ that satisfies $\r\leq \y(1)\leq \mathbf{1} ~\text{and}~ \y(1)^\mathsf{T}\mathbf{1}= k$. 
	
	Then, in each round $t$, the selected set $S_t$ is derived by performing randomized dependent rounding, i.e., $\textsc{DepRounding}$, on $\y(1)$. Note that $|S_t|=k$ because of the loop invariant $\y^{\mathsf{T}}\mathbf{1}=k$ in $\textsc{DepRounding}$.
	Let a random variable $Y_u(t)\in \{0,1\}$ represent whether element $u$ is selected or not in round $t$, i.e., $Y_u(t)=\mathbb{I}_{\{u \in S_t\}}$. We have $\mathbb{P}\{Y_u(t)=1\}=y_u(1)$ and $\sum_{u\in \N} Y_u(t) = \sum_{u\in \N} y_u(1) = k$ according to the \emph{marginal distribution} property and the \emph{degree-preservation} property of dependent rounding scheme respectively in \cite{gandhi2006dependent}. Eventually, for the stationary randomized algorithm, we have $\liminf_{T\to \infty} \frac{1}{T}\sum_{t=1}^{T}\E\left[\mathbb{I}_{\{u\in S_t\}}\right] = \E\left[ \mathbb{I}_{\{u \in S_t\}}\right]=\mathbb{P}\{Y_u(t)=1\} = y_u(1)\geq r_u$ for every element $u$. The fairness requirement in Eq.~\eqref{eq:fraction requirement} is satisfied. 
	
	Consider $T>0$ rounds. Besides, for every element $u$, $Y_u(1), Y_u(2), \dots, Y_u(T)$ are \emph{i.i.d.} Bernuoulli variables with mean $y_u(1)$. According to the Hoeffding Inequality \cite{hoeffding1994probability}, the empirical mean of these variables 
     	satisfies, for any $\delta>0$,
	\begin{equation}
		\mathbb{P}\left\{\frac{1}{T}\sum_{t=1}^{T} Y_u(t) - y_u(1)  \leq - \delta\right\} \leq e^{-2T\delta^2},
	\end{equation}
	which further indicates the inequality in Eq.~\eqref{eq:results_fairness} since $Y_u(t) =\mathbb{I}_{u\in S_t}$ and $y_u(1)\geq r_u$ for each element $u$. 
\end{proof}

\subsection{Proof of Theorem~\ref{thm:approx_ratio_faircg1}}
Before proving Theorem~\ref{thm:approx_ratio_faircg1}, we first show an important inequality in the following Lemma \ref{lem:utility_inequality}.

\begin{lemma} \label{lem:utility_inequality}
	Consider any time point $\tau$ in Step 1 of FairCG. Combining the definitions of $\mathbf{w}(\tau)$ and $\x(\tau)$, 
	we have the following inequality,
	\begin{equation}
		\x(\tau)^\mathsf{T} \mathbf{w}(\tau) \geq U_{\mathrm{opt}} -F(\y(\tau)). \notag
		\label{eq:lemma_for_approx}
	\end{equation} 
\end{lemma}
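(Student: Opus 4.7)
The plan is to certify the inequality by exhibiting a single explicit point in $P_f$, built from the LP-optimal distribution, whose dot product with $\mathbf{w}(\tau)$ already meets the bound. Since $\x(\tau)$ is the maximizer of $\x^{\mathsf{T}}\mathbf{w}(\tau)$ over $P_f$, producing any such witness suffices.

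Concretely, let $\mathbf{q}^{*}$ be an optimal solution of the LP in \eqref{eq:optimal_LP} (which exists by Lemma~\ref{lem:randomized_exist}), and define the convex combination $\x^{*} \triangleq \sum_{S\in\N_k} q^{*}_S\, \mathbf{1}_S$, where $\mathbf{1}_S$ is the characteristic vector of $S$. First I would verify $\x^{*}\in P_f$: each $\mathbf{1}_S$ satisfies $\mathbf{1}_S^{\mathsf{T}}\mathbf{1}=|S|\le k$ so $\x^{*\mathsf{T}}\mathbf{1}\le k$; by constraint \eqref{eq:optimal_LP_fair}, $x^{*}_u=\sum_{S\ni u} q^{*}_S \ge r_u$; and $x^{*}_u\le 1$ since $\mathbf{q}^{*}$ is a probability distribution. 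Hence $\x^{*}$ is a feasible candidate for the inner maximization defining $\x(\tau)$.

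Next I would establish the per-set inequality $\mathbf{1}_S^{\mathsf{T}}\mathbf{w}(\tau) \ge f(S) - F(\y(\tau))$ for every $S\in\N_k$. Fixing an arbitrary ordering $S=\{u_1,\dots,u_{|S|}\}$ and telescoping,
\begin{equation*}
F\bigl(\y(\tau)\vee \mathbf{1}_S\bigr) - F(\y(\tau)) = \sum_{i=1}^{|S|}\Bigl[F\bigl(\y(\tau)\vee \mathbf{1}_{\{u_1,\dots,u_i\}}\bigr)-F\bigl(\y(\tau)\vee \mathbf{1}_{\{u_1,\dots,u_{i-1}\}}\bigr)\Bigr].
\end{equation*}
Each increment is bounded above by $F(\y(\tau)\vee \mathbf{1}_{u_i})-F(\y(\tau))=w_{u_i}(\tau)$ because the multilinear extension of a submodular function has marginals that are non-increasing in the other coordinates (a direct consequence of the submodularity of $f$ applied coordinate-by-coordinate to the probabilistic interpretation of $F$). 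Summing yields $F(\y(\tau)\vee\mathbf{1}_S)-F(\y(\tau))\le \mathbf{1}_S^{\mathsf{T}}\mathbf{w}(\tau)$. Combining with monotonicity, $F(\y(\tau)\vee \mathbf{1}_S)\ge F(\mathbf{1}_S)=f(S)$, delivers the per-set bound.

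Finally I would average against $\mathbf{q}^{*}$ and invoke optimality of $\x(\tau)$:
\begin{equation*}
\x(\tau)^{\mathsf{T}}\mathbf{w}(\tau) \;\ge\; \x^{*\mathsf{T}}\mathbf{w}(\tau) \;=\; \sum_{S\in\N_k} q^{*}_S\,\mathbf{1}_S^{\mathsf{T}}\mathbf{w}(\tau) \;\ge\; \sum_{S\in\N_k} q^{*}_S\,\bigl[f(S)-F(\y(\tau))\bigr] \;=\; U_{\mathrm{opt}}-F(\y(\tau)),
\end{equation*}
using $\sum_S q^{*}_S=1$ in the last step. The main obstacle is the telescoping/submodularity step, which is standard for multilinear extensions but has to be stated carefully; the novelty relative to the classical continuous greedy analysis is that the witness $\x^{*}$ is drawn from an LP optimum over a distribution of feasible sets rather than from a single best set $S^{*}$, which is precisely why the argument continues to work even though $P_f$ does not contain the origin.
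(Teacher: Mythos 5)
Your proposal is correct and follows essentially the same route as the paper's proof: the same witness $\x^{*}=\sum_{S\in\N_k}q^{*}_S\mathbf{1}_S\in P_f$, the same reduction to a per-set bound via submodularity and monotonicity, and the same averaging against $\mathbf{q}^{*}$. The only cosmetic difference is that you telescope directly on the multilinear extension $F$, whereas the paper carries out the equivalent submodularity argument inside the expectation $\E_{R\sim\y}[f(R(\cdot))]$.
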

The proof for Lemma \ref{lem:utility_inequality} is shown in Appendix \ref{app:proof_lemma_ineq}.

\begin{proof}[Proof of Theorem~\ref{thm:approx_ratio_faircg1}]
	To derive the result in Theorem \ref{thm:approx_ratio_faircg1}, we first show that the fractional vector $\y(1)$ satisfies $F(\y(1)) \geq (1-1/e) U_{\mathrm{opt}}$ (\textcircled{1}) and then prove $\mathbb{E}[f(S_t)]\geq F(\y(1))$ in each round $t\in \{1,2, \cdots\}$ (\textcircled{2}). Combining the conditions \textcircled{1} and \textcircled{2}, we derive the result in Eq.~\eqref{eq:approx_ratio_faircg1}. 
	
	First, we show \textcircled{1}: $F(\y(1)) \geq (1-1/e) U_{\mathrm{opt}}$. 
	
	The error due to discretization could be made (polynomially) small 
	\cite{vondrak2008optimal}, and thus, we here only give an analysis for the continuous version. Starting with $\y(0)=\mathbf{0}$ and $F(\y(0))=F(\mathbf{0})=0$, we want to see how much $F(\cdot)$ increases during each discretized time interval $[\tau,\tau+d\tau)$ in the \texttt{while} loop (Lines 
	2-9). Applying the chain rule yields:
	\begin{equation}
		\begin{aligned}
			&\frac{dF(\y(\tau))}{d\tau} = \sum_{u \in \N}\left(\frac{dy_u(\tau)}{d\tau}\cdot \frac{\partial F(\y)}{\partial y_u}\bigg|_{\y = \y(\tau)}\right)\\
			=&\sum_{u \in \N}\left(x_u(\tau)\cdot \frac{\partial F(\y)}{\partial y_u}\bigg|_{\y = \y(\tau)}\right) \\
			=&\sum_{u \in \N}\left(x_u(\tau)\cdot \frac{F(\y(\tau)\vee \mathbf{1}_u)-F(\y(\tau))}{1-y_u(\tau)}\right)\\
			\geq& \sum_{u \in \N} x_u(\tau) \cdot w_u(\tau)\\
			=& \x(\tau)^\mathsf{T} \mathbf{w}(\tau). \label{eq:derivative_F1}
		\end{aligned}
	\end{equation}

	Combining the result in Lemma \ref{lem:utility_inequality}, we have the differential inequality with respect to the function of $F(\y(\tau))$: 
	$$
	\frac{dF(\y(\tau))}{d\tau} \geq U_{\mathrm{opt}}- F(\y(\tau)).
	$$
	Solving the above differential inequality with the initial condition $F(\y(0))=0$ under FairCG1, we have the fractional vector $\y(1)$ satisfying
	\begin{equation}
		F(\y(1)) \geq (1-1/e) U_{\mathrm{opt}}. \label{eq:Fy_bound_faircg1}
	\end{equation}
	
	
	Then, we show
	$\mathbb{E}[f(S_t)]\geq F(y(1))$ by employing the property of dependent rounding and convexity of the multilinear extension $F(\cdot)$ in any direction $\mathbf{d} = \mathbf{1}_u-\mathbf{1}_v$ for any pair of distinct elements $u,v\in \N$. Here, the expectation is taken over the randomness of selecting set with FairCG.
	
	The randomized dependent rounding process proceeds as follows. It starts with an $n$-dimensional fractional vector $\y$ and rounds at least one floating element $y_u\in (0,1)$ in each iteration of the \texttt{While} loop. 
	Hence, the \texttt{while} loop takes at most $n$ iterations. 
	
	Let $\z_{m}$ be the random variable denoting the value of $\y$ at the beginning of iteration $m$, and let $n_0$ be the last iteration after which all the elements of $\y$ are integers ($n_0\leq n$). We have $\z_1 = \y(1)$ and the returned set $S= \{u\in \N: (Z_{n_0+1})_u=1)\}$. 
	In the following, we will first show that 
	\begin{equation}
		\forall m, \E[F(\z_{m+1})] \geq \E[F(\z_{m})],  \label{eq:depR_updating_guarantee}
	\end{equation}
	where the expectation is taking over the randomness 
	of the updating step in Line 16 in Algorithm~\ref{alg:faircg1}.
	Then, we have $$
	\E[f(S)] =\E[F(\z_{n_0+1})]\geq \E[F(\z_1)]= F(\y(1)).
	$$
	
	We now prove the inequality Eq.~\eqref{eq:depR_updating_guarantee} for a fixed iteration $m$. Suppose that elements $u$ and $v$ with $y_u, y_v \in (0,1)$ are the two elements found in current iteration (Line~14 of Algorithm~\ref{alg:faircg1}. Let $a_m$ and $b_m$ be the values of $a$ and $b$ respectively in iteration $m$, and $\mathbf{d}_m = \mathbf{1}_u-\mathbf{1}_v$. Then, we have $\z_{m+1}$ equal to $\z_m+a_m \mathbf{d}_m$ with probability $\frac{b_m}{a_m+b_m}$ and equal to $\z_m-b_m \mathbf{d}_m$ with probability $\frac{a_m}{a_m+b_m}$, which indicates the conditional expectation given $F(\z_m)=h$, 
	\begin{equation}
		\begin{aligned}
			&\E[F(\z_{m+1})|F(\z_m)=h] \\
			=& \frac{b_m}{a_m+b_m} F(\z_m+a_m \mathbf{d}_{m})+ \frac{a_m}{a_m+b_m}F(\z_m-b_m \mathbf{d}_m).
		\end{aligned}
	\end{equation}
	Let $g_\z(\xi)\triangleq F(\z_m+ \xi \mathbf{d}_m)$. Now $g_\z(\xi)$ is convex in $\xi$ due to the convexity of $F(\cdot)$ in the direction $\mathbf{d}_m = \mathbf{1}_u-\mathbf{1}_v$.
	It indicates: $ \frac{b_m}{a_m+b_m}g_\z(a_m) + \frac{a_m}{a_m+b_m}g_\z(-b_m)\geq g_\z(0)$, i.e., $\frac{b_m}{a_m+b_m} F(\z_m+a_m \mathbf{d}_{m})+ \frac{a_m}{a_m+b_m}F(\z_m-b_m \mathbf{d}_m)\geq F(\z_m)$.
	Hence, we have $$\E[F(\z_{m+1})|F(\z_m)=h]\geq F(\z_m)=h.$$
	Let $H$ be the set of all possible values of $h$. We have the following inequalities:
	\begin{equation}
		\begin{aligned}
			&\E[F(\z_{m+1})] \\
			= &\sum_{h\in H} \E[F(\z_{m+1})|F(\z_m)=h]\cdot \mathbb{P}\{F(\z_m)=h\} \\
			\geq & \sum_{h\in H}h\cdot \mathbb{P}\{F(\z_m)=h\} \\
			=& \E[F(\z_m)],
		\end{aligned}
	\end{equation}
	which is exactly Eq.~\eqref{eq:depR_updating_guarantee}. This completes our proof for $\E[f(S)]\geq F(\y(1))$ and then the results in Eq.~\eqref{eq:approx_ratio_faircg2} by combining the result in Eq.~\eqref{eq:Fy_bound}. 
\end{proof}

\subsection{Proof of Lemma \ref{lem:utility_inequality}} \label{app:proof_lemma_ineq}
\begin{proof}
	Let $\mathbf{1}_S$ represent the characteristic vector of subset $S$ and $\q^*=[q^*_S]_{S\in \N_k}$ denote an optimal solution to Problem~\eqref{eq:optimal_LP}. Define an $n$-dimensional vector $\y^{*}\triangleq \sum_{S\in \N_k}q_S^* \mathbf{1}_S$ with the coordinate corresponding to $u$ being $y^*_u = \sum_{S\in \N_k: u \in S} q_S^{*}$ for each $u$ in $\N$. 
	Then, we have $\r\leq \y^*\leq \mathbf{1}$ (constrains in Problem~\eqref{eq:optimal_LP}) and ${\y^*}^\mathsf{T}\mathbf{1}\leq k$ since any set $S\in \N_k$ satisfies $\mathbf{1}_S^\mathsf{T}\mathbf{1}\leq k$, which implies $\y^*\in P_f$. 
	Based on the definition of $\x(\tau)$, we have
	\begin{equation}
		\begin{aligned}
			&\x(\tau)^\mathsf{T}  \mathbf{w}(\tau)	\geq  {\y^*}^\mathsf{T} \mathbf{w}(\tau) \\
			= &\sum_{u \in \N} y^*_u \left( F(\y(\tau)\vee \mathbf{1}_u)-F(\y(\tau))\right)  \\
			=&\sum_{u \in \N} \sum_{S\in\N_k:u \in S} q^*_S  \left( F(\y(\tau)\vee \mathbf{1}_u)-F(\y(\tau))\right) \\
			= & \sum_{S\in\N_k} q^*_S  \sum_{u \in S} \left( F(\y(\tau)\vee \mathbf{1}_u)-F(\y(\tau))\right).
		\end{aligned} \notag
	\end{equation}
	Recall that the multilinear extension $F(\y)$ represents the expected value of the submodular function $f(R(\y))$ where $R(\y)$ is a random set with each element $u$ being independently selected with probability $y_u$. To distinguish from other randomness, we denote the multilinear extension as $F(\y) = \E_{R\thicksim \y}[f(R(\y))]$. We have
	\begin{equation}
		\begin{aligned}
			& \sum_{S\in\N_k} q^*_S  \sum_{u \in S} \left( F(\y(\tau)\vee \mathbf{1}_u)-F(\y(\tau))\right) \\
			=& \sum_{S\in \N_k} q^*_S  \sum_{u \in S} \mathbb{E}_{R\thicksim \y}[f(R(\y(\tau)\vee \mathbf{1}_u))-f(R(\y(\tau)))]\\
			\overset{(a)}{\geq}& \sum_{S\in\N_k} q^*_S    \mathbb{E}_{R\thicksim \y}[f(R(\y(\tau)\vee \mathbf{1}_{S}))-f(R(\y(\tau)))]  \quad \\  
			\overset{(b)}{\geq}& \sum_{S\in\N_k} q^*_S  \mathbb{E}_{R\thicksim \y}[f(R(\mathbf{1}_{S}))-f(R(\y(\tau)))] \\
			=& \sum_{S\in\N_k} q^*_S  (f(S)-F(\y(\tau)))  \quad \\ 
			=& \sum_{S\in\N_k} q^*_S  f(S)- \sum_{S\in \N_k} q^*_SF(\y(\tau))) \\
			=& U_{\mathrm{opt}}- F(\y(\tau))), \notag
		\end{aligned}
	\end{equation}
	where 
	inequality $(a)$ holds due to the submodularity of $f$ and $(b)$ due to its monotonicity. 
	
	Then, we derive the result Eq.~\eqref{eq:lemma_for_approx} in Lemma~\ref{lem:utility_inequality}. 
\end{proof}

\subsection{Proof of Theorem \ref{thm:approx_ratio_faircg2}}\label{app:proof_approx}
\begin{proof}
	Similar to the proof of Theorem~\ref{thm:approx_ratio_faircg1}, we first show that the fractional vector $\y(1)$ satisfies $F(\y(1)) \geq (1-1/e^{c_{\mathbf{r}}}) U_{\mathrm{opt}} +F(\mathbf{r})/e^{c_{\mathbf{r}}}$ (\textcircled{1}) and then that $\mathbb{E}[f(S_t)]\geq F(\y(1))$ holds in each round $t=\{1, 2, \cdots\}$, (\textcircled{2}), which is exactly the same as Theorem~\ref{thm:approx_ratio_faircg1}. Combining them, we derive the result in Eq.~\eqref{eq:approx_ratio_faircg2}. 
	
	In this proof, we will show
	$F(\y(1))\geq (1-1/e^{c_{\mathbf{r}}}) U_{\mathrm{opt}} +F(\mathbf{r})/e^{c_{\mathbf{r}}}$ under FairCG2. 
	
FairCG2 starts with $\y(0)=\r$ and updates $\y(\tau)$ with a rate $\x(\tau)-\r$. As the previous proof, we will see how much $F(\cdot)$ increases during each discretized time interval $[\tau,\tau+d\tau)$ in the \texttt{while} loop (Lines 
	2-9). By aplying the chain rule, we have
	\begin{equation}
		\begin{aligned}
			&\frac{dF(\y(\tau))}{d\tau} = \sum_{u \in \N}\left(\frac{dy_u(\tau)}{d\tau}\cdot \frac{\partial F(\y)}{\partial y_u}\bigg|_{\y = \y(\tau)}\right)\\
			=&\sum_{u \in \N}\left(\left(x_u(\tau)-r_u\right)\cdot \frac{\partial F(\y)}{\partial y_u}\bigg|_{\y = \y(\tau)}\right) \\
			=&\sum_{u \in \N}\left(\left(x_u(\tau)-r_u\right)\cdot \frac{F(\y(\tau)\vee \mathbf{1}_u)-F(\y(\tau))}{1-y_u(\tau)}\right)\\
			\geq& \sum_{u \in \N} \left(x_u(\tau)-r_u\right) \cdot w_u(\tau) = (\x(\tau)-\r)^\mathsf{T} \mathbf{w}(\tau). \label{eq:derivative_F2}
		\end{aligned}
	\end{equation}
	Let $\y^{\prime}(\tau)$ be a point in the convex polytope $P_f$, such that $\y^{\prime}(\tau)-\r = c_{\tau} \x(\tau)$, where $c_{\tau}$ is a constant. Then, $\y^{\prime}(\tau)$ satisfies the following two constraints:
	$$
	\begin{cases}
	y^{\prime}_u(\tau)\leq 1, \forall u \\
	\sum_{u\in \N} y^{\prime}_u(\tau) \leq k,
	\end{cases}
	$$
	i.e.,
	\begin{equation}
		\begin{cases}
			r_u +c_{\tau}x_u(\tau)\leq 1, \forall u \\
			\sum_{u\in \N} r_u +c_{\tau}k \leq k, \label{eq:ct_constraints}
		\end{cases}
	\end{equation}
	where the second inequality of Eq.~\eqref{eq:ct_constraints} is from $\sum_{u \in \N} x_u(\tau)=k$ due to the fact that $\x(\tau)$ must be a vertex of the convex polytope $P_f$. Let $D_{\tau} = \{u\in \N: x_u(\tau)\neq 0\}$. Combining the constraints in Eq.~\eqref{eq:ct_constraints}, we set $c_{\tau} = \min\{\min\limits_{u\in D_{\tau}}\frac{1-r_u}{x_u(\tau)}, 1-\frac{\sum_{u\in \N}r_u}{k}\}$
	and let $c_{\mathbf{r}}=1-\max\{\max\limits_u r_u, \frac{\sum_{u} r_u}{k}\}$. Obviously, we have $c_{\tau}\geq c_{\mathbf{r}}$ and thus $\y^{\prime}(\tau)-\r\geq c_{\r}\x(\tau)$. Then, the inequality in Eq.~\eqref{eq:derivative_F2} becomes
	\begin{equation}
		\begin{aligned}
			&\frac{dF(\y(\tau))}{d\tau} \\
			\geq& (\x(\tau)-\r)^\mathsf{T} \mathbf{w}(\tau) \\
			\geq&(\y^{\prime}(\tau)-\r)^\mathsf{T}  \mathbf{w}(\tau) \\
			\geq& c_{\mathbf{r}} \x(\tau)^\mathsf{T}  \mathbf{w}(\tau). \notag
		\end{aligned}
	\end{equation}
	where the second inequality is from the definition of $\x(\tau)$. 

	Combining Lemma \ref{lem:utility_inequality}, we have the differential inequality with respect to the function of $F(\y(\tau))$: 
	$$
	\frac{dF(\y(\tau))}{d\tau} \geq c_{\r} (U_{\mathrm{opt}}- F(\y(\tau))).
	$$
	Solving the above differential inequality with the initial condition $F(\y(0))=F(\r)$ under FairCG2, we have the fractional vector $\y(1)$ satisfying
	\begin{equation}
		F(\y(1)) \geq (1-1/e^{c_{\mathbf{r}}}) U_{\mathrm{opt}} +F(\mathbf{r})/e^{c_{\mathbf{r}}}. \label{eq:Fy_bound}
	\end{equation}

Combining the result $\mathbb{E}[f(S_t)]\geq F(\y(1))$ in each round $t$, we derive the result in Eq.~\eqref{eq:approx_ratio_faircg2}.
\end{proof}

\subsection{Proof of Theorem \ref{thm:dg_fair}}\label{app:short_fair_proof}

\High{
Before proving Theorem~\ref{thm:dg_fair}, we first present a sufficient and necessary condition for a fairness requirement vector $\r$ being feasible. The proof is shown in Appendix~\ref{app:proof_feasible_r}.
\begin{lemma}\label{lem:feasible_r}
	A fairness requirement vector $\r$ is feasible if and only if $\r^\mathsf{T}\mathbf{1} \leq k$.
\end{lemma}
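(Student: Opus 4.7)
The plan is to prove the two directions of the equivalence separately. For necessity, I would assume $\r$ is feasible and aggregate the per-element fairness constraints over all $u\in\N$ to bound $\r^\mathsf{T}\mathbf{1}$ above by $k$. Specifically, if some algorithm produces $(S_1,S_2,\dots)$ with $|S_t|\leq k$ and meets \eqref{eq:fraction requirement}, then
\[
\r^\mathsf{T}\mathbf{1}=\sum_{u\in\N} r_u\le \sum_{u\in\N}\liminf_{T\to\infty}\frac{1}{T}\sum_{t=1}^{T}\E[\mathbb{I}_{\{u\in S_t\}}]\le\liminf_{T\to\infty}\frac{1}{T}\sum_{t=1}^{T}\E[|S_t|]\le k,
\]
where the first inequality uses finiteness of $\N$ (so $\sum\liminf\leq\liminf\sum$), and the second uses $\sum_u\mathbb{I}_{\{u\in S_t\}}=|S_t|$ together with $|S_t|\leq k$.

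For sufficiency, I would directly exhibit a stationary randomized policy that satisfies \eqref{eq:fraction requirement} whenever $\r^\mathsf{T}\mathbf{1}\leq k$. The idea is to augment $\r$ to a vector $\y\in[0,1]^{\N}$ such that $\y\ge\r$ and $\y^\mathsf{T}\mathbf{1}=k$, and then apply the \textsc{DepRounding} procedure from Algorithm~\ref{alg:faircg1} independently in each round. Concretely, I would distribute the slack $k-\r^\mathsf{T}\mathbf{1}\ge 0$ among the coordinates while respecting the upper bound $y_u\le 1$: for instance, iteratively add as much as possible to each coordinate (its remaining headroom $1-r_u$) until the total slack is consumed. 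Such a distribution is always possible because $\sum_u(1-r_u)=n-\r^\mathsf{T}\mathbf{1}\ge k-\r^\mathsf{T}\mathbf{1}$ (using $n\ge k$).

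Given this $\y$, by the marginal distribution and degree-preservation properties of \textsc{DepRounding} (as already invoked in the proof of Theorem~\ref{thm:fairness_guarantee}), each round outputs a set $S_t$ of size exactly $k$ with $\mathbb{P}(u\in S_t)=y_u\ge r_u$. Running this as an i.i.d.\ stationary randomized policy immediately yields $\liminf_{T\to\infty}\frac{1}{T}\sum_{t=1}^{T}\E[\mathbb{I}_{\{u\in S_t\}}]=y_u\ge r_u$ for every $u\in\N$, certifying feasibility of $\r$.

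The only subtle step is the construction of the extension $\y$: one must ensure simultaneously $\y\ge\r$, $\y\le\mathbf{1}$, and $\y^\mathsf{T}\mathbf{1}=k$. This is not truly an obstacle—the greedy slack-allocation argument above handles it—but it is where care is needed to avoid an invalid choice (e.g., simply shifting every coordinate by a constant can violate $y_u\le 1$ when some $r_u$ is close to $1$). No deep submodularity or optimization machinery is required; the lemma is essentially a polytope non-emptiness statement, and the rounding result supplied by \textsc{DepRounding} converts the fractional certificate into an actual schedule.
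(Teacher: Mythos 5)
Your proof is correct, and the necessity direction is essentially identical to the paper's (summing the per-element constraints, using superadditivity of $\liminf$ over the finite ground set and the cardinality bound $\sum_{u}\mathbb{I}_{\{u\in S_t\}}\leq k$). Where you genuinely diverge is sufficiency. The paper constructs an explicit \emph{deterministic} schedule: it unrolls $T$ rounds into $kT$ slots, assigns element $u_i$ to $\lceil(\sum_{j=1}^i r_{u_j})T\rceil-\lceil(\sum_{j=1}^{i-1} r_{u_j})T\rceil$ consecutive slots, folds the slots back into rounds, and checks that no element lands twice in the same round because each element occupies at most $T$ slots; the selection fractions are then read off directly with a ceiling-error term that vanishes. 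You instead produce a fractional certificate $\y$ with $\r\leq\y\leq\mathbf{1}$ and $\y^\mathsf{T}\mathbf{1}=k$ (your slack-allocation argument is sound, and the needed inequality $n-\r^\mathsf{T}\mathbf{1}\geq k-\r^\mathsf{T}\mathbf{1}$ indeed follows from $k\leq n$), and then invoke the marginal-distribution and degree-preservation properties of \textsc{DepRounding} to realize $\y$ as an i.i.d.\ stationary randomized policy with $\mathbb{P}\{u\in S_t\}=y_u\geq r_u$. Your route is shorter and reuses machinery the paper has already established for Theorem~\ref{thm:fairness_guarantee}, and it makes transparent that the lemma is really a polytope non-emptiness statement about $P_f$; the paper's route is more self-contained (it needs no rounding lemma and no randomness) and yields a concrete deterministic policy, which is a slightly stronger form of feasibility certificate. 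Both are valid proofs of the lemma as stated, since feasibility only requires existence of \emph{some} algorithm, randomized or not.
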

}
The proof of Theorem~\ref{thm:dg_fair} is inspired by \cite{patil2019stochastic}. We show that the fairness ``debt'' for each element $u$ by the end of round $t$ is less than one, i.e., $rt-N_{u,t}<1$ for all $u$ in $\N$ and each $t$. However, our proof is more involved because of the combinatorial nature in each round.
Specifically, in \cite{patil2019stochastic}, only one element could be selected in each round, and a feasible requirement $\r$ satisfies $nr\leq 1$. They divide the interval $(-\infty, nr)$ into $n+1$ partitions by the points in $\{0,r, 2r, \cdots, (n-1)r\}$ and show that the fairness ``debt'' of each element must lie in one of the partitions in every round, which further implies $rt-N_{u,t}<nr\leq 1$. During this process, it is not difficult to identify the new partition to which each element belongs after the selection in each round. However, for the MMSM-CF problem we consider, a fairness requirement $\r$ is feasible only if $nr\leq k$ (Lemma~\ref{lem:feasible_r}). We cannot do the partitions in a similar manner since $n/k$ is not necessarily an integer. Moreover, it sometimes becomes difficult to identify the new partition to which each element belongs after the selection. In our proof, we divide the time horizon into small periods and group the elements in $\N$ according to their selection times. Then, we show by induction that two useful conditions about the groups hold in every period, which further implies the short-term fairness guarantee.
\begin{proof} 

We prove Theorem~\ref{thm:dg_fair} by showing that $\frac{1}{T}\sum_{t=1}^{T} \mathbb{I}_{\{u\in S_{t}\}} > r- \frac{1}{T}$ holds for every element $u$ and any $ T\in \{1,2,\dots\}$. That is, $rT-N_{u,T}< 1$ holds for every element $u$ and any $ T\in \{1,2,\dots\}$.
	We call the value $d_{u,T} \triangleq rT-N_{u,T}$ of element $u$ as the fairness debt of element $u$ by the end of round $T$. In the proof, we show that this debt is less than one for every element $u$ and any $T=1,2,\cdots$.
	
	Note that the fairness debt of each element $u$ in round $T$ is determined by its being selected times $N_{u,T}$ (because we assume the same fairness requirement $r$ for every $u$). We focus on finding the potential trend in terms of the selection times of all elements with FairDG. 
	In round $T$, we partition the ground set $\N$ according to the selected times of each element $N_{u,T}$. Define
	$J_{T,m} \triangleq \{u: N_{u,T}=m\}$. Then, we have $\N = J_{T,0}\cup J_{T,1}\cup \cdots \cup J_{T,T}$ for any $T\in \{1,2, \cdots\}$. Moreover, for simplicity, we define the following sets:
	\begin{align}
		\bar{J}_{T,m} \triangleq \{u: N_{u,T}> m\}, \notag\\
		\underline{J}_{T,m} \triangleq \{u: N_{u,T}< m\}. \notag
	\end{align}
	
	Then the following lemma is the key of the proof, implying the short-term fairness in Theorem~\ref{thm:dg_fair}.
	
	\begin{lemma}\label{lem:inductive_rslt}
		For any integer $m\in \{0,1,2, \cdots\}$
		, we have 
		
		$1.~J_{\lceil\frac{m}{r}\rceil,m}\cup \bar{J}_{\lceil\frac{m}{r}\rceil,m}=\N,$
		
		$2.~\vert J_{\lceil\frac{m}{r}\rceil,m} \vert \leq 
		\left(\frac{m+1}{r}- \lceil\frac{m}{r}\rceil \right) k.$
	\end{lemma}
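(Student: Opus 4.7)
The plan is to prove Lemma~\ref{lem:inductive_rslt} by strong induction on $m$. For the base case $m = 0$, we have $T_0 = \lceil 0/r \rceil = 0$, so $J_{0,0} = \N$ and $\bar{J}_{0,0} = \emptyset$, making Part~1 immediate and reducing Part~2 to $n \leq k/r$, which is the feasibility condition $nr \leq k$ established in Lemma~\ref{lem:feasible_r}.

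For the inductive step, assume both parts hold at every $m' \leq m$ and derive them at $m+1$. The decisive observation for Part~1 is that in any round $t$ with $T_m < t < T_{m+1}$, we have $rt < m+1$, so the eligibility set $A_t = \{u : rt \geq N_{u,t-1}\}$ can contain only workers of count at most $m$. By the inductive Part~1, every worker already has count at least $m$ after round $T_m$ and counts are monotone non-decreasing; hence $A_t \subseteq J_{t-1,m}$. FairDG therefore selects exactly $\min(|J_{t-1,m}|, k)$ count-$m$ workers in each such round (whether the $|A_t| \geq k$ or the $|A_t| < k$ branch fires), yielding the recurrence $|J_{t,m}| = \max(0, |J_{t-1,m}| - k)$. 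Iterating from $T_m$ and combining with the inductive Part~2 bound $|J_{T_m,m}| \leq (\frac{m+1}{r} - T_m)k$ together with $T_{m+1} - T_m \geq \frac{m+1}{r} - T_m$ forces $|J_{T_{m+1},m}| = 0$, which is Part~1 at level $m+1$.

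For Part~2, my plan is to augment the preceding analysis with an overshoot-counting argument. Every round $t \in (T_m, T_{m+1}]$ with $|J_{t-1,m}| < k$ contributes $k - |J_{t-1,m}|$ marginal-gain selections onto workers of count at least $m+1$; since only $|J_{T_m,m}|$ selections suffice to promote every original count-$m$ worker, the total number of overshoot selections in the interval is $k(T_{m+1} - T_m) - |J_{T_m,m}|$, which by the inductive hypothesis is at least $k(T_{m+1} - (m+1)/r)$. Combining this with a per-worker cap of $T_{m+1} - (m+1)$ on absorbable overshoots and with feasibility $n \leq k/r$ should lower-bound $|\bar{J}_{T_{m+1}, m+1}|$ and translate into the desired upper bound on $|J_{T_{m+1},m+1}| = n - |\bar{J}_{T_{m+1},m+1}|$. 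I expect this step to be the main obstacle: the naive conservation identity $kT_{m+1} = \sum_u N_{u,T_{m+1}}$ yields only a lower bound on $|J_{T_{m+1},m+1}|$, so the upper bound must exploit algorithmic structure. The subtlety is that marginal-gain selections are $f$-dependent, potentially allowing one high-count worker to absorb many overshoots instead of promoting distinct count-$(m+1)$ workers to $m+2$; controlling this will likely require the strong form of the induction across all $m' \leq m$ to pin down the count distribution above $m$ at round $T_m$, and then verifying that feasibility $nr \leq k$ tightens the resulting inequality in the correct direction.
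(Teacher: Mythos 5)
Your base case and your Part~1 machinery for the interior rounds of the period $\{\lceil\frac{m}{r}\rceil+1,\dots,\lceil\frac{m+1}{r}\rceil\}$ essentially match the paper: there $A_t=J_{t-1,m}$, the recurrence $|J_{t,m}|=\max(0,|J_{t-1,m}|-k)$ holds, and iterating it from the inductive bound gives $|J_{\lceil\frac{m+1}{r}\rceil-1,m}|\leq(\frac{m+1}{r}+1-\lceil\frac{m+1}{r}\rceil)k\leq k$. One small gap: your iteration silently extends this recurrence through the \emph{last} round $t=\lceil\frac{m+1}{r}\rceil$, but in that round the eligible set is $A_t=J_{t-1,m}\cup J_{t-1,m+1}$, not $J_{t-1,m}$; to conclude $|J_{t,m}|=0$ you must additionally invoke the debt ordering (a count-$m$ worker has debt $rt-m$ versus $rt-(m+1)$ for a count-$(m+1)$ worker), so that the at most $k$ surviving count-$m$ workers are all selected first. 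That is fixable.

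The genuine gap is Part~2. Your overshoot-counting plan --- conservation of the $k\lceil\frac{m+1}{r}\rceil$ total selections plus a per-worker cap of $\lceil\frac{m+1}{r}\rceil-(m+1)$ on absorbable overshoots, combined with $n\leq k/r$ --- is not strong enough, and you correctly flagged it as the obstacle. Concretely, take $m=0$, $r=0.3$, $n=k/r$, so $\lceil\frac{1}{r}\rceil=4$: the cap argument gives $4k=\sum_u N_{u,4}\leq|J_{4,1}|+4(n-|J_{4,1}|)$, i.e. $|J_{4,1}|\leq\frac{4(n-k)}{3}=\frac{28}{9}k\approx 3.11k$, while the lemma demands $(\frac{2}{r}-4)k=\frac{8}{3}k\approx 2.67k$. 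A bound that ignores \emph{when} the overshoots occur cannot close this gap. The paper instead obtains Part~2 from a single-round analysis of the terminal round $t=\lceil\frac{m+1}{r}\rceil$ of the period: since $|J_{t-1,m}|\leq k$ and those workers have strictly larger debt, all of them are promoted to count $m+1$; and when $|A_t|\geq k$ the entire selection $S_t$ is drawn from $A_t=J_{t-1,m}\cup J_{t-1,m+1}$, so the remaining $k-|J_{t-1,m}|$ slots necessarily promote that many \emph{distinct} count-$(m+1)$ workers to count $m+2$ --- no selection can be absorbed by a worker of count at least $m+2$. Hence $|J_{t,m+1}|\leq|J_{t-1,m}|+|J_{t-1,m+1}|-(k-|J_{t-1,m}|)\leq n-k+|J_{t-1,m}|$, and substituting $|J_{t-1,m}|\leq(\frac{m+1}{r}+1-\lceil\frac{m+1}{r}\rceil)k$ and $n\leq k/r$ yields exactly the claimed bound; the case $|A_t|<k$ is easier since then $J_{t,m+1}$ consists only of the newly promoted workers. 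This exploitation of the fairness-priority branch in the last round of each period is the idea your proposal is missing.
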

	
	We provide the proof of Lemma~\ref{lem:inductive_rslt} in Appendix~\ref{app:proof_lemma_debt}.
	
	Condition $1$ in Lemma~\ref{lem:inductive_rslt} ensures that each element is selected for at least $m$ times by the end of round $\lceil\frac{m}{r}\rceil$. 
	
 Consider an arbitrary integer $m\geq 0$. For any round $T$ such  that $\lceil\frac{m}{r}\rceil \leq T \leq  \lceil\frac{m+1}{r}\rceil-1$, and any element $u$, we have $N_{u,T}\geq m$ and the fairness debt of $u$ satisfying
	\begin{equation}
		\begin{aligned}
			d_{u,T}=& rT- N_{u,T}  \\  \leq & 
			r\left(\left\lceil\frac{m+1}{r}\right\rceil-1\right)- m \\
			< & r \left(\frac{m+1}{r}\right) - m = 1.
		\end{aligned}
	\end{equation}
	Therefore, we have $d_{u,T}<1$ for every $T\in\{1,2,\cdots\}$, implying FairDG being $1$-fair.
\end{proof}
\subsection{Proof of Lemma~\ref{lem:feasible_r}}\label{app:proof_feasible_r}
\begin{proof}
	First, we prove that the condition, $\r^\mathsf{T}  \mathbf{1}\leq k$, is necessary: if $\r$ is feasible, we have $\r^{\mathsf{T}}\mathbf{1}\leq k$. 
	
	If the requirement vector $\r$ is feasible, there exists a policy that schedules a sequence of sets $\mathcal{S} = (S_1, S_2, \dots)$ 
	satisfying the fairness requirement in Eq.~\eqref{eq:fraction requirement},
	which implies
	\begin{equation}
		\begin{aligned}
			\sum_{u\in\N}r_u\leq &\sum_{u\in\N}\liminf_{T\to \infty}\frac{1}{T} \sum_{t=1}^{T} \E\left[\mathbb{I}_{\{u \in S_t\}}\right]\\
			\leq & \liminf_{T\to \infty}\frac{1}{T} \sum_{t=1}^{T}  \E\left[\sum_{u\in\N}\mathbb{I}_{\{u \in S_t\}}\right]\\
			\overset{(a)}{\leq} & \liminf_{T\to \infty}\frac{1}{T} \sum_{t=1}^{T} k=k,
		\end{aligned}
	\end{equation}
	where (a) is from the cardinality constraint. 
	Hence, we have 
	$\sum_{u\in \N} r_u \leq k$, i.e., $\r^\mathsf{T}  \mathbf{1} \leq k$.
	
	Then, we show that the condition $\r^\mathsf{T}  \mathbf{1}\leq k$ is also sufficient. That is, we can always find a policy satisfying the fairness requirement in Eq.~\eqref{eq:fraction requirement} as long as $\r^\mathsf{T}  \mathbf{1}\leq k$. Consider $T>0$ rounds.
	As described in the model, in each round we can select $k$ elements. Thus, we can treat these $T$ rounds as $kT$ slots, where we can select one element in each slot. Denote the $kT$ slots as $(1^{(1)}, \cdots, T^{(1)}, 1^{(2)}, \cdots, T^{(2)}, \cdots, 1^{(k)}, \cdots, T^{(k)})$ and the ground set as $\N=\{u_1, u_2, \cdots, u_n\}$.
	Consider a policy $\pi^{\prime}$ that, starting from slot $1^{(1)}$, assigns each element $u_i$ for consecutive slots one by one until the $\lceil(\sum_{j=1}^i r_{u_j})T\rceil$-th slot. Specifically, assign the first element $u_1$ in the first $\lceil r_{u_1}T\rceil$ slots, the second element $u_2$ in the following $\lceil (r_{u_2}+r_{u_1})T\rceil -\lceil r_{u_1}T\rceil$ slots, and so on. 
	Since $\r^\mathsf{T}  \mathbf{1}\leq k$, i.e., $\lceil\sum_{u_i\in \N} r_{u_i} T\rceil \leq \lceil kT \rceil=kT$, policy $\pi^{\prime}$ completes the above assignment for the last element $u_n$, and each element $u_i$ is assigned in $\lceil(\sum_{j=1}^i r_{u_j})T\rceil-\lceil(\sum_{j=1}^{i-1} r_{u_j})T\rceil$ slots.
	Then, for any $t\leq T$, select the elements that are assigned in the slots, $t^{(1)}, t^{(2)}, \cdots, t^{(k)}$, in round $t$. For those rounds with less than $k$ elements, add any element that has not been selected in that round. Note that each element $u_i$ will be scheduled in distinct rounds 
	due to $\lceil(\sum_{j=1}^i r_{u_j})T\rceil-\lceil(\sum_{j=1}^{i-1} r_{u_j})T\rceil\leq \lceil r_{u_i}T\rceil \leq T$. Taking limits yields that the selection fraction of each element $u_i$ satisfies
	\begin{equation}
		\begin{aligned}
			&\liminf_{T\to \infty}\frac{1}{T} \sum_{t=1}^{T} \E\left[\mathbb{I}_{\{u_i \in S_t\}}\right]\\
			\geq &\liminf_{T\to \infty}\frac{\left\lceil(\sum_{j=1}^i r_{u_j})T\right\rceil-\left\lceil(\sum_{j=1}^{i-1} r_{u_j})T\right\rceil}{T} 
			\\
			\geq& \liminf_{T\to \infty}\frac{\left\lceil r_{u_i}T\right\rceil-1 }{T}= r_u,  
		\end{aligned}
	\end{equation}
	Therefore, with the condition $\r^\mathsf{T}  \mathbf{1}\leq k$, policy $\pi^{\prime}$ satisfies the fairness requirements Eq.~\eqref{eq:fraction requirement}.
\end{proof}
\subsection{Proof of Lemma~\ref{lem:inductive_rslt}}\label{app:proof_lemma_debt}

We prove Lemma~\ref{lem:inductive_rslt} by induction. In the inductive step, we assume that the two conditions hold for any $m\in \{0,1,2,\cdots\}$ and then show that they still hold for $m+1$. First, we present two important results based on the the assumption and then directly use them in the inductive step. 

We divide the time horizon into small periods at $\lceil\frac{m}{r}\rceil$, where $m=0,1,2,\cdots$. 
Denote the $m$-th period as $\mathcal{T}_m$, i.e., $\mathcal{T}_m = \{\lceil\frac{m-1}{r}\rceil+1, \cdots, \lceil\frac{m}{r}\rceil\}$ for $m=1,2, \cdots$.

Assume that the two conditions hold for $m$. We have $N_{u,\lceil\frac{m}{r}\rceil}\geq m$ for every element $u$ and the following three observations. 

\textbf{Observation 1: } For any element $u$ with $N_{u,t-1}=m$, i.e., $u\in J_{t-1,m}$, we have the following observations.

$1.~$ if $u\in S_t$ (selected), then $u\in J_{t,m+1}$,

$2.~$ if $u\notin S_t$ (not selected), then $u\in J_{t,m}$. 

\textbf{Observation 2: }
Assume $N_{u,\lceil\frac{m}{r}\rceil}\geq m$ for every $u$ in $\N$. For $t\in \mathcal{T}_{m+1}\backslash \{\lceil\frac{m+1}{r}\rceil\}$, we have the following results:

$1.~ A_t =      J_{t-1,m}$.

$2.~J_{t, m}\subseteq J_{t-1,m} $

$3. $ If $|A_t|\geq k$, then $|J_{t,m}|=|J_{t-1,m}|-k, \text{and } |J_{t,m+1}|=|J_{t-1,m+1}|+k$.

$4. $ If $|A_t|<k$, then $|J_{t,m}|=0, \text{and } |J_{t,m+1}|\leq |J_{t-1,m}|+ |J_{t-1,m+1}|$.

\begin{proof}
	For any $t=\lceil\frac{m}{r}\rceil +1,\cdots, \lceil\frac{m+1}{r}\rceil-1$, we first have $N_{u,t-1}\geq m$ for any $u$ in $\N$, i.e., $J_{t-1,m}\cup \bar{J}_{t-1,m}=\N$.
	
	$1.~$ 
	i) $\forall u\in J_{t-1,m}$, 
	
	$$rt-N_{u,t-1}>r\lceil\frac{m}{r}\rceil-m\geq 0$$
	
	$$\Rightarrow u\in A_t.$$
	
	ii) $\forall u\in \bar{J}_{t-1,m}$, $N_{u,t-1}\geq m+1$ and
	
	$$
	\begin{aligned}
	rt-N_{u,t-1}&\leq r\left(\lceil\frac{m+1}{r}\rceil-1\right)-(m+1)\\
	&< r\cdot \frac{m+1}{r}-(m+1)=0
	\end{aligned}
	$$
	
	$$\Rightarrow u\notin A_t.$$
	Then, we have the result $1.$
	
	$2.$ Notice that $J_{t-1,m}\cup \bar{J}_{t-1,m}=\N$. 
	
	i) $\forall u\in J_{t-1,m}$, 
	\begin{equation}
		u\in
		\left\{
		\begin{array}{ll}
			J_{t,m+1}, &  \text{if~} u\in S_t,\\
			J_{t,m}, &\text{if~} u\notin S_t.
		\end{array}
		\right.
	\end{equation}
	i.e.,
	\begin{equation}
		u\in
		\left\{
		\begin{array}{ll}
			\bar{J}_{t,m}, &  \text{if~} u\in S_t,\\
			J_{t,m}, &\text{if~} u\notin S_t.
		\end{array}
		\right.
	\end{equation}
	ii) $\forall u\in \bar{J}_{t-1,m}$, we have
	\begin{equation}
		u\in \bar{J}_{t,m},
	\end{equation}
	whenever $u$ is selected or not.

	From above, we know that every element $u$ in $J_{t,m}$ is from $J_{t-1,m}$ and then the result holds.
	
	$3.$
	if $\vert A_t\vert \geq k$, then $S_t\subseteq A_t$ according to FairDG, indicating that $k$ elements from $J_{t-1,m}$ are selected as $S_t$. Hence, the selection times of element $u$ is
	\begin{equation}
		\begin{aligned}
			&N_{u,t}
			=
			\left\{
			\begin{array}{ll}
				m+1, &  \forall u\in S_t,\\
				m, &\forall u \in J_{t-1,m}\backslash S_t,\\
				N_{u,t-1}\geq m+1, &\forall u\in \bar{J}_{t-1,m}.
			\end{array}
			\right.
		\end{aligned}
	\end{equation}
	We have $J_{t,m}=J_{t-1,m}\backslash S_t$ and then 
	$|J_{t,m}|=|J_{t-1,m}|-k$. Similarly, $J_{t,m+1}=J_{t-1,m+1}\cup S_t$ and then 
	$|J_{t,m+1}|=|J_{t-1,m+1}|+k$.
	
	$4.$ If $|A_t|<k$, then every element in $A_t$ is selected in round $t$, i.e., $A_t\subseteq S_t$. Hence, the selection times of element $u$ is
	\begin{equation}
		\begin{aligned}
			&N_{u,t}
			=
			\left\{
			\begin{array}{ll}
				m+1, &  \forall u\in J_{t-1,m},\\
				N_{u,t-1}+1\geq m+2, &\forall u \in S_t\backslash J_{t-1,m},\\
				N_{u,t-1}\geq m+1, &\forall u\in \bar{J}_{t-1,m}\backslash S_t.
			\end{array}
			\right.
		\end{aligned}
	\end{equation}
	Then, we have $J_{t,m}=\emptyset$ and $J_{t,m+1}=J_{t-1,m}\cup J_{t-1,m+1}\backslash (S_t\cap J_{t-1,m+1})$. Obviously, the result holds. 
\end{proof}

\textbf{Observation 3: } Assume $N_{u,\lceil\frac{m}{r}\rceil}\geq m$ for every $u$ in $\N$. 
In round $t=\lceil\frac{m+1}{r}\rceil$, we have:
$$ A_t =      J_{t-1,m}\cup J_{t-1,m+1}.$$
\begin{proof}
	Obviously, we have $N_{u,\lceil\frac{m+1}{r}\rceil-1}\geq m$ for any $u$ in $\N$ and thus, $J_{\lceil\frac{m+1}{r}\rceil-1,m}\cup J_{\lceil\frac{m+1}{r}\rceil-1,m+1}\cup \bar{J}_{\lceil\frac{m+1}{r}\rceil-1,m+1}=\N$.
	
	i) $\forall u\in J_{\lceil\frac{m+1}{r}\rceil-1,m}$, 
	
	$$rt-N_{u,t-1}=r\lceil\frac{m+1}{r}\rceil-m> 0$$
	
	$$\Rightarrow u\in A_t.$$
	
	ii) $\forall u\in J_{\lceil\frac{m+1}{r}\rceil-1,m+1}$, 
	
	$$rt-N_{u,t-1}=r\lceil\frac{m+1}{r}\rceil-(m+1)\geq 0$$
	
	$$\Rightarrow u\in A_t.$$
	
	iii) $\forall u\in \bar{J}_{\lceil\frac{m+1}{r}\rceil-1,m+1}$, $N_{u,\lceil\frac{m+1}{r}\rceil-1}\geq m+2$ and
	$$
	\begin{aligned}
	rt-N_{u,t-1}&= r\lceil\frac{m+1}{r}\rceil-(m+2)\\
	&\leq r\left(\lceil\frac{m+2}{r}\rceil-1\right)-(m+2)\\
	&< r\cdot \frac{m+2}{r}-(m+2)=0
	\end{aligned}
	$$
	
	$$\Rightarrow u\notin A_t.$$
\end{proof}

\begin{proof}[Proof of Lemma~\ref{lem:inductive_rslt}]
	We prove the lemma by induction.
	
	\underline{Induction base case ($m=0$):} At the very beginning, we have $N_{u,0}=0$ for any $u$ in $\N$ and thus, $\N = J_{0,0}\cup \bar{J}_{0,0}$ trivially. Moreover, we have 
	$\vert J_{0,0} \vert =n\leq \frac{1}{r} \cdot k$. That is, results $1.$ and $2.$ in Lemma~\ref{lem:inductive_rslt} hold for $m=0$ trivially.

	\underline{Inductive Step:} Assume that  both the results hold for $m$. Combining the above observations, we show the results hold for $m+1$ in the following.
	
	Consider three cases: 
	
	i) when $\lceil\frac{m+1}{r}\rceil>\lceil\frac{m}{r}\rceil+1$, there exists a round $\tau \in \{\lceil\frac{m}{r}\rceil+1, \cdots, \lceil\frac{m+1}{r}\rceil-1 \}
	$, such that $|A_{\tau}|<k$; 
	
	ii) when $\lceil\frac{m+1}{r}\rceil>\lceil\frac{m}{r}\rceil+1$, for every $\tau \in
	\{\lceil\frac{m}{r}\rceil+1, \cdots, \lceil\frac{m+1}{r}\rceil-1 \}
	$, $|A_{\tau}|\geq k$ holds; 
	
	iii) $\lceil\frac{m+1}{r}\rceil=\lceil\frac{m}{r}\rceil+1$.

	\textbf{Case i)}: There exists $\tau \in \mathcal{T}_{m+1}\backslash \{\lceil\frac{m+1}{r}\rceil\} $ such that $|A_{\tau}|<k$.
	
	Due to $A_{\tau} = J_{\tau-1,m}$ and $\N  = J_{\tau-1,m}\cup \bar{J}_{\tau-1,m}$, we have 
	$$\forall u \in J_{\tau-1,m} \Rightarrow u\in S_{\tau}\Rightarrow u\in J_{\tau,m+1}$$
	
	\begin{equation}
		\begin{aligned}
			\forall v \in \bar{J}_{\tau-1,m} 
			&\Rightarrow
			\left\{
			\begin{array}{ll}
				v\in \bar{J}_{\tau,m+1} &~\text{if}~ v\in S_{\tau}\\
				v\in \bar{J}_{\tau,m}
				&~\text{if}~ v\notin S_{\tau}
			\end{array}
			\right.
			\\
			&\Rightarrow v\in \bar{J}_{\tau,m}=J_{\tau, m+1} \cup \bar{J}_{\tau, m+1}
		\end{aligned}
	\end{equation}
	Hence, we derive $\N = J_{\tau, m+1} \cup \bar{J}_{\tau,m+1}$. For each element $u$, we have $N_{u,\lceil\frac{m+1}{r}\rceil}\geq N_{u,\tau}\geq m+1$, and thus, $u\in J_{\lceil\frac{m+1}{r}\rceil,m} \cup \bar{J}_{\lceil\frac{m+1}{r}\rceil,m+1}$. Therefore, we obtain the first result $\N = J_{\lceil\frac{m+1}{r}\rceil,m} \cup \bar{J}_{\lceil\frac{m+1}{r}\rceil,m+1}$ in Lemma~\ref{lem:inductive_rslt}.
	
	Besides, we have $J_{\lceil\frac{m+1}{r}\rceil-1,m}\subseteq J_{\tau,m}=\emptyset$ and then $A_{\lceil\frac{m+1}{r}\rceil} =J_{\lceil\frac{m+1}{r}\rceil-1,m+1}$ and $\N =J_{\lceil\frac{m+1}{r}\rceil-1,m+1}\cup \bar{J}_{\lceil\frac{m+1}{r}\rceil-1,m+1}$.

	1) If $|A_{\lceil\frac{m+1}{r}\rceil}|\leq k$, then all elements with $N_{u,\lceil\frac{m+1}{r}\rceil-1}=m+1$ are selected. We have 
	
	$\forall u \in J_{\lceil\frac{m+1}{r}\rceil-1, m+1} \Rightarrow u\in S_{\lceil\frac{m+1}{r}\rceil}$, and then, $u\in J_{\lceil\frac{m+1}{r}\rceil, m+2}$
	
	$\Rightarrow |J_{\lceil\frac{m+1}{r}\rceil,m+1}|=0$ satisfying the result trivially.
	
	2) If $|A_{\lceil\frac{m+1}{r}\rceil}|> k$, then $k$ elements from $J_{\lceil\frac{m+1}{r}\rceil-1,m+1}$ are selected in this round. Then, we have
	$$
	\begin{aligned}
	&|J_{\lceil\frac{m+1}{r}\rceil, m+1}|\\
	&= |J_{\lceil\frac{m+1}{r}\rceil-1, m+1}|- k\\
	&\leq n-k\\
	&\leq n-k+ \left(\frac{m+1}{r} +1 - \lceil\frac{m+1}{r}\rceil \right)k\\
	& \leq \frac{1}{r}\cdot k + \frac{m+1}{r}\cdot k- \lceil\frac{m+1}{r}\rceil \cdot k\\
	& \leq \left(\frac{m+2}{r}- \lceil\frac{m+1}{r}\rceil \right) k.
	\end{aligned}$$
	Hence, the second statement in Lemma~\ref{lem:inductive_rslt} holds for $m+1$.
	
	\textbf{Case ii):} When $\lceil\frac{m+1}{r}\rceil>\lceil\frac{m}{r}\rceil+1$, for every $\tau \in
	\{\lceil\frac{m}{r}\rceil+1, \cdots, \lceil\frac{m+1}{r}\rceil-1 \}
	$, $|A_{\tau}|\geq k$ holds.
	
	According to {Observation 2}, we have
	$$|J_{\lceil\frac{m}{r}\rceil+1,m}|=|J_{\lceil\frac{m}{r}\rceil,m}|-k$$
	$$|J_{\lceil\frac{m}{r}\rceil+2,m}|=|J_{\lceil\frac{m}{r}\rceil+1,m}|-k$$
	$$\cdots$$
	$$|J_{\lceil\frac{m+1}{r}\rceil-1,m}|=|J_{\lceil\frac{m+1}{r}\rceil-2,m}|-k$$

	\begin{equation}
	 \begin{aligned}
	\Rightarrow &|J_{\lceil\frac{m+1}{r}\rceil-1,m}| = |J_{\lceil\frac{m}{r}\rceil,m}|- \left(\lceil\frac{m+1}{r}\rceil-1-\lceil\frac{m}{r}\rceil \right)k\\
	& \leq \left(\frac{m+1}{r}- \lceil\frac{m}{r}\rceil \right) k- \left(\lceil\frac{m+1}{r}\rceil-1-\lceil\frac{m}{r}\rceil \right)k \\
	&= \left(\frac{m+1}{r}+1 - \lceil\frac{m+1}{r}\rceil \right)k. \label{eq:selection_no_update}
	\end{aligned}  
	\end{equation}

	In round $\lceil\frac{m+1}{r}\rceil$, we have $A_{\lceil\frac{m+1}{r}\rceil}=J_{\lceil\frac{m+1}{r}\rceil-1,m}\cup J_{\lceil\frac{m+1}{r}\rceil-1,m+1}$ according to Observation 3. 
	
	1) If $|A_{\lceil\frac{m+1}{r}\rceil}|\leq k$, 
	
	$\forall u \in J_{\lceil\frac{m+1}{r}\rceil-1, m}$ will be selected, i.e., $u\in S_{\lceil\frac{m+1}{r}\rceil}$, and then, $u\in J_{\lceil\frac{m+1}{r}\rceil, m+1}$;
	
	$\forall u \in J_{\lceil\frac{m+1}{r}\rceil-1, m+1}$ will be selected, i.e., $u\in S_{\lceil\frac{m+1}{r}\rceil}$, and then, $u\in J_{\lceil\frac{m+1}{r}\rceil, m+2}$. That is, the number of selection times for each element will be
	\begin{equation}
		\begin{aligned}
			&N_{u,\lceil\frac{m+1}{r}\rceil}&\\
			=&
			\left\{
			\begin{array}{ll}
				m+1, &  \forall u\in J_{\lceil\frac{m+1}{r}\rceil-1, m},\\
				m+2, &\forall u \in J_{\lceil\frac{m+1}{r}\rceil-1, m+1},\\
				N_{u,\lceil\frac{m+1}{r}\rceil-1}+1\geq m+3, &\forall u\in S_{\lceil\frac{m+1}{r}\rceil}\backslash A_{\lceil\frac{m+1}{r}\rceil},\\
				N_{u,\lceil\frac{m+1}{r}\rceil-1}\geq m+2, &\forall u\in \N\backslash S_{\lceil\frac{m+1}{r}\rceil}.
			\end{array}
			\right.
		\end{aligned}
	\end{equation}
	
	Then, we have $J_{\lceil\frac{m+1}{r}\rceil, m}=\emptyset$, indicating $\N=J_{\lceil\frac{m+1}{r}\rceil, m+1} \cup \bar{J}_{\lceil\frac{m+1}{r}\rceil, m+1}$.
	Moreover, by Eq.~\eqref{eq:selection_no_update} we have
	$$
	\begin{aligned}
	|J_{\lceil\frac{m+1}{r}\rceil, m+1}|&=|J_{\lceil\frac{m+1}{r}\rceil-1, m}| \\
	&\leq \left(\frac{m+1}{r}+1 - \lceil\frac{m+1}{r}\rceil \right)k\\
	& \leq \left(\frac{m+2}{r}- \lceil\frac{m+1}{r}\rceil \right) k,
	\end{aligned}$$
	where the last step is from $r\leq 1$. That is, the second result in Lemma~\ref{lem:inductive_rslt} holds for $m+1$ when $|A_{\lceil\frac{m+1}{r}\rceil}|\leq k$. 
	
	2) Consider the case when $|A_{\lceil\frac{m+1}{r}\rceil}|>k$. 
	
	Note that every element $u$ in $J_{\lceil\frac{m+1}{r}\rceil-1,m}$ with a larger fairness debt (smaller $N_{u,\lceil\frac{m+1}{r}\rceil-1}$) have a higher priority of being selected than any element in $J_{\lceil\frac{m+1}{r}\rceil-1, m+1}$.
	
	Since $|J_{\lceil\frac{m+1}{r}\rceil-1, m}| \leq \left(\frac{m+1}{r}+1 - \lceil\frac{m+1}{r}\rceil \right)k\leq k$, every element in $J_{\lceil\frac{m+1}{r}\rceil-1, m}$ will be selected. That is, for $\forall u \in J_{\lceil\frac{m+1}{r}\rceil-1, m}$,  we have $u\in S_{\lceil\frac{m+1}{r}\rceil}$, and then, $u\in J_{\lceil\frac{m+1}{r}\rceil, m+1}$. 
	
	Besides, $k-|J_{\lceil\frac{m+1}{r}\rceil-1,m}|$ elements from $J_{\lceil\frac{m+1}{r}\rceil-1,m+1}$ are selected.

	Then, we have $J_{\lceil\frac{m+1}{r}\rceil, m}=\emptyset$, indicating $\N={J}_{\lceil\frac{m+1}{r}\rceil, m+1} \cup \bar{J}_{\lceil\frac{m+1}{r}\rceil, m+1}$.
	
	For $\forall u \in S_{\lceil\frac{m+1}{r}\rceil}\backslash J_{\lceil\frac{m+1}{r}\rceil-1, m}$, we have $u\in J_{\lceil\frac{m+1}{r}\rceil-1, m+1}$ ($|A_{\lceil\frac{m+1}{r}\rceil}|>k$), and then, $u\in J_{\lceil\frac{m+1}{r}\rceil, m+2}$. In details, the number of selection times of each element will be
	\begin{equation}
		\begin{aligned}
			&N_{u,\lceil\frac{m+1}{r}\rceil}&\\
			=&
			\left\{
			\begin{array}{ll}
				m+1, &  \forall u\in J_{\lceil\frac{m+1}{r}\rceil-1, m},\\
				m+2, &\forall u \in J_{\lceil\frac{m+1}{r}\rceil-1, m+1} \cap S_{\lceil\frac{m+1}{r}\rceil},\\
				m+1, &\forall u\in
				J_{\lceil\frac{m+1}{r}\rceil-1, m+1} \backslash S_{\lceil\frac{m+1}{r}\rceil}, \\
				N_{u,\lceil\frac{m+1}{r}\rceil-1}\geq m+2, &\text{otherwise}
			\end{array}
			\right.
		\end{aligned}
	\end{equation}
	
	Finally, we have
	$$
	\begin{aligned}
	&|J_{\lceil\frac{m+1}{r}\rceil, m+1}|\\
	&=|J_{\lceil\frac{m+1}{r}\rceil-1, m}| + |J_{\lceil\frac{m+1}{r}\rceil-1, m+1}|- \left(k-|J_{\lceil\frac{m+1}{r}\rceil-1,m}|\right)\\
	&\leq n-k+|J_{\lceil\frac{m+1}{r}\rceil-1,m}|\\
	&\leq n-k+ \left(\frac{m+1}{r} - \lceil\frac{m}{r}\rceil \right)k\\
	& \leq \frac{1}{r}\cdot k -k+ \frac{m+1}{r}\cdot k- \lceil\frac{m}{r}\rceil \cdot k\\
	& = \frac{m+2}{r} \cdot k - (1+ \lceil\frac{m}{r}\rceil) k\\
	& \leq \left(\frac{m+2}{r}- \lceil\frac{m+1}{r}\rceil \right) k.
	\end{aligned}$$
    That is, the second result in Lemma~\ref{lem:inductive_rslt} holds for $m+1$ when $|A_{\lceil\frac{m+1}{r}\rceil}|>k$. 
    
\textbf{Case iii):} $\lceil\frac{m+1}{r}\rceil=\lceil\frac{m}{r}\rceil+1$.

In round $\lceil\frac{m+1}{r}\rceil$, we have $A_{\lceil\frac{m+1}{r}\rceil}=J_{\lceil\frac{m}{r}\rceil,m}\cup J_{\lceil\frac{m}{r}\rceil,m+1}$ according to Observation 3. 
We consider the two cases 1) $|A_{\lceil\frac{m+1}{r}\rceil}|\leq k$ and 2) $|A_{\lceil\frac{m+1}{r}\rceil}|>k$  as in the other two cases. 

1) If $|A_{\lceil\frac{m+1}{r}\rceil}|\leq k$, 
	
	$\forall u \in J_{\lceil\frac{m}{r}\rceil, m}$ will be selected, i.e., $u\in S_{\lceil\frac{m+1}{r}\rceil}$, and then, $u\in J_{\lceil\frac{m+1}{r}\rceil, m+1}$;
	
	$\forall u \in J_{\lceil\frac{m}{r}\rceil, m+1}$ will be selected, i.e., $u\in S_{\lceil\frac{m+1}{r}\rceil}$, and then, $u\in J_{\lceil\frac{m+1}{r}\rceil, m+2}$. That is, the number of selection times for each element will be
	\begin{equation}
		\begin{aligned}
			&N_{u,\lceil\frac{m+1}{r}\rceil}&\\
			=&
			\left\{
			\begin{array}{ll}
				m+1, &  \forall u\in J_{\lceil\frac{m}{r}\rceil, m},\\
				m+2, &\forall u \in J_{\lceil\frac{m}{r}\rceil, m+1},\\
				N_{u,\lceil\frac{m}{r}\rceil}+1\geq m+3, &\forall u\in S_{\lceil\frac{m+1}{r}\rceil}\backslash A_{\lceil\frac{m+1}{r}\rceil},\\
				N_{u,\lceil\frac{m}{r}\rceil}\geq m+2, &\forall u\in \N\backslash S_{\lceil\frac{m+1}{r}\rceil}.
			\end{array}
			\right.
		\end{aligned}
	\end{equation}
	
	Then, we have $J_{\lceil\frac{m+1}{r}\rceil, m}=\emptyset$, indicating $\N={J}_{\lceil\frac{m+1}{r}\rceil, m+1} \cup \bar{J}_{\lceil\frac{m+1}{r}\rceil, m+1}$.
	Moreover, we have
	$$
	\begin{aligned}
	|J_{\lceil\frac{m+1}{r}\rceil, m+1}|&=|J_{\lceil\frac{m}{r}\rceil, m}| \\
	& \leq \left(\frac{m+1}{r}- \lceil\frac{m}{r}\rceil \right) k \\
	&= \left(\frac{m+1}{r}+1 - \lceil\frac{m+1}{r}\rceil \right)k\\
	& \leq \left(\frac{m+2}{r}- \lceil\frac{m+1}{r}\rceil \right) k,
	\end{aligned}$$
	where the last step is from $r\leq 1$. That is, the second result in Lemma~\ref{lem:inductive_rslt} holds for $m+1$ when $|A_{\lceil\frac{m+1}{r}\rceil}|< k$.
	
	2) Consider the case when $|A_{\lceil\frac{m+1}{r}\rceil}|>k$. 
	
	Note that every element $u$ in $J_{\lceil\frac{m}{r}\rceil,m}$ with a larger fairness debt (smaller $N_{u,\lceil\frac{m}{r}\rceil}$) have a higher priority of being selected than any element in $J_{\lceil\frac{m}{r}\rceil, m+1}$.
	
	Since $|J_{\lceil\frac{m}{r}\rceil, m}| \leq \left(\frac{m+1}{r}+1 - \lceil\frac{m+1}{r}\rceil \right)k\leq k$, every element in $J_{\lceil\frac{m}{r}\rceil, m}$ will be selected. That is, for $\forall u \in J_{\lceil\frac{m}{r}\rceil, m}$,  we have $u\in S_{\lceil\frac{m+1}{r}\rceil}$, and then, $u\in J_{\lceil\frac{m+1}{r}\rceil, m+1}$. 
	
	Besides, $k-|J_{\lceil\frac{m}{r}\rceil,m}|$ elements from $J_{\lceil\frac{m}{r}\rceil,m+1}$ are selected.

	Then, we have $J_{\lceil\frac{m+1}{r}\rceil, m}=\emptyset$, indicating $\N={J}_{\lceil\frac{m+1}{r}\rceil, m+1} \cup \bar{J}_{\lceil\frac{m+1}{r}\rceil, m+1}$.
	
	For $\forall u \in S_{\lceil\frac{m+1}{r}\rceil}\backslash J_{\lceil\frac{m}{r}\rceil, m}$, we have $u\in J_{\lceil\frac{m}{r}\rceil, m+1}$ ($|A_{\lceil\frac{m+1}{r}\rceil}|>k$), and then, $u\in J_{\lceil\frac{m+1}{r}\rceil, m+2}$. In details, the number of selection times of each element will be
	\begin{equation}
		\begin{aligned}
			&N_{u,\lceil\frac{m+1}{r}\rceil}&\\
			=&
			\left\{
			\begin{array}{ll}
				m+1, &  \forall u\in J_{\lceil\frac{m}{r}\rceil, m},\\
				m+2, &\forall u \in J_{\lceil\frac{m}{r}\rceil, m+1} \cap S_{\lceil\frac{m+1}{r}\rceil},\\
				m+1, &\forall u\in
				J_{\lceil\frac{m}{r}\rceil, m+1} \backslash S_{\lceil\frac{m+1}{r}\rceil}, \\
				N_{u,\lceil\frac{m}{r}\rceil}\geq m+2, &\forall u\in \N\backslash J_{\lceil\frac{m}{r}\rceil, m}\backslash J_{\lceil\frac{m}{r}\rceil, m+1}.
			\end{array}
			\right.
		\end{aligned}
	\end{equation}
	
	Finally, we have
	$$
	\begin{aligned}
	&|J_{\lceil\frac{m+1}{r}\rceil, m+1}|\\
	&=|J_{\lceil\frac{m+1}{r}\rceil-1, m}| + |J_{\lceil\frac{m+1}{r}\rceil-1, m}|- \left(k-|J_{\lceil\frac{m+1}{r}\rceil-1,m}|\right)\\
	&\leq n-k+|J_{\lceil\frac{m+1}{r}\rceil-1,m}|\\
	&\leq n-k+ \left(\frac{m+1}{r} - \lceil\frac{m}{r}\rceil \right)k\\
	& \leq \frac{1}{r}\cdot k -k+ \frac{m+1}{r}\cdot k- \lceil\frac{m}{r}\rceil \cdot k\\
	& = \frac{m+2}{r} \cdot k - (1+ \lceil\frac{m}{r}\rceil) k\\
	& \leq \left(\frac{m+2}{r}- \lceil\frac{m+1}{r}\rceil \right) k.
	\end{aligned}$$
	That is, the second result in Lemma~\ref{lem:inductive_rslt} holds for $m+1$ when $|A_{\lceil\frac{m+1}{r}\rceil}|\leq k$.
Therefore, the two results in Lemma~\ref{lem:inductive_rslt} hold for any $m\geq 0$. We complete the proof. 
\end{proof}
\end{document}